\newcommand{\ham}{\mathcal{H}}
\newcommand{\zsym}{\mathbb{Z}_2}
\newcommand{\vs}{\varsigma}
\newcommand{\graph}{\mathcal{G}}
\newcommand{\simp}{\sim_P}
\newcommand{\aut}{\text{aut}}
\newcommand{\autg}{\aut(\graph)}
\newcommand{\gham}{G_\ham}
\newtheorem{prop}{Proposition}
\title{Efficient Identification of Permutation Symmetries in Many-Body Hamiltonians via Graph Theory}
\author[1]{Saumya Shah \thanks{Electronic Mail: saumya.shah@u.nus.edu}}
\author[1, 2]{Patrick Rebentrost \thanks{Electronic Mail: cqtfpr@nus.edu.sg}}
\affil[1]{School of Computing, National University of Singapore, Singapore}
\affil[2]{Centre for Quantum Technologies, Singapore}
\date{}
\begin{document}

\maketitle

\begin{abstract}

The computational cost of simulating quantum many-body systems can often be reduced by taking advantage of physical symmetries.
While methods exist for specific symmetry classes, a general algorithm to find the full permutation symmetry group of an arbitrary Pauli Hamiltonian is notably lacking. This paper introduces a new method that identifies this symmetry group by establishing an isomorphism between the Hamiltonian's permutation symmetry group and the automorphism group of a coloured bipartite graph constructed from the Hamiltonian. We formally prove this isomorphism and show that for physical Hamiltonians with bounded locality and interaction degree, the resulting graph has a bounded degree, reducing the computational problem of finding the automorphism group to polynomial time. The algorithm's validity is empirically confirmed on various physical models with known symmetries. We further show that the problem of deciding whether two Hamiltonians are permutation-equivalent is polynomial-time reducible to the graph isomorphism problem using our graph representation. This work provides a general, structurally exact tool for algorithmic symmetry finding, enabling the scalable application of these symmetries to Hamiltonian simulation problems.

\end{abstract}

\section{Introduction}

The simulation of quantum many-body systems is a foundational problem in quantum chemistry and condensed matter physics. For classical computers, the exponential scaling of the Hilbert space in the number of qubits presents a computational challenge. While quantum computers offer a path forward, near-term algorithms are constrained by limited resources and noise. A primary strategy for mitigating these costs is to exploit the intrinsic physical symmetries of the target Hamiltonian.
The existence of a symmetry generally implies a conserved quantity, which partitions the system's exponentially large Hilbert space into smaller, independent subspaces, or \emph{symmetry sectors} \cite{meyerExploitingSymmetryVariational2023, gardEfficientSymmetrypreservingState2020}. Confining the simulation to a relevant sector can thus reduce the dimensionality of the problem. To exploit this technique in a scalable manner, the symmetries of the Hamiltonian must be identified algorithmically.

Several common types of symmetries are often identified. Methods for identifying discrete $\zsym$ symmetries, for example, are well-established and enable \emph{tapering}, a technique that removes qubits from the simulation by replacing their operators with eigenvalues \cite{bravyi2017taperingqubitssimulatefermionic, setiaReducingQubitRequirements2020}. Similarly, continuous symmetries like $U(1)$ and $SU(2)$ are typically identified by checking if the Hamiltonian commutes with their respective generators \cite{corry2017symmetry}.
However, a general method for identifying the full permutation symmetry group $\gham$ of an arbitrary interacting Pauli Hamiltonian is less established. The lack of a general algorithm to find it represents a significant research gap that hinders the construction of scalable permutation-equivariant circuits.

In this work, we address this gap by establishing a formal graph-theoretic isomorphism. We demonstrate that any Hamiltonian $\ham$, expressed as a weighted sum of Pauli strings, can be mapped to a faithful coloured bipartite graph $\graph$ such that the Hamiltonian's permutation symmetry group is isomorphic to the graph's automorphism group:
\begin{equation}
    \gham \cong \autg.
\end{equation}
This result reduces the problem of identifying permutation symmetries to the well-studied computational problem of finding a graph's automorphism group. We further show that for physical Hamiltonians, the resulting graphs have bounded-degree properties that permit efficient, polynomial-time solutions. We provide numerical verification of the results on common physical Hamiltonians.
The algorithmic finding of permutation symmetries discussed in this work promises the scalable construction of permutation-equivariant circuits to find ground states of many Hamiltonians.

This paper is structured as follows: Section \RN{2} surveys existing literature and presents deficiencies that motivate the method introduced in this paper. Section \RN{3} discusses the mathematical preliminaries for the method, including permutation symmetries, their mathematical formalisms, and automorphism groups. It also discusses certain physical models of interest and their known symmetry groups. Section \RN{4} introduces the explicit graph construction, proves the isomorphism between the permutation symmetry group and the automorphism group of this constructed graph, and presents a runtime analysis demonstrating polynomial-time complexity for physical systems. Section \RN{5} provides empirical validation by applying this algorithm to several physical models and verifying that the computed automorphism group generators exactly match the known, analytically-derived permutation symmetries. Section \RN{6} presents conclusions and directions for further research.

\section{Related Work}

\subsection{Symmetries in Hamiltonians}

An effective strategy to mitigate the exponential scaling of the Hilbert space in Hamiltonian simulation problems is to identify and exploit the intrinsic physical symmetries of the system's Hamiltonian. The existence of a symmetry group implies a block-diagonal structure for the Hamiltonian, partitioning the Hilbert space into independent symmetry sectors. Subsequent confinement of the simulation to a single sector thus offers a significant reduction in the problem's effective dimensionality.
The scalable application of this strategy depends on the algorithmic identification and exploitation of these symmetries. There are robust and efficient methods for several common symmetry classes. This section surveys methods in the algorithmic identification and exploitation of Hamiltonian symmetry groups.

\paragraph{Exploiting Symmetries in Hamiltonians.}
The exploitation of symmetries is a primary strategy for mitigating the computational cost of finding properties of Hamiltonians. These strategies can be broadly partitioned into several distinct technical applications.

The most direct application has been the building of \textbf{symmetry-preserving ans\"atze}, which guarantee that the variational search remains in the correct subspace. Methods for this include symmetrising the generator pool using techniques like twirling \cite{meyerExploitingSymmetryVariational2023}, building circuits that preserve particle number and other $\mathbb{Z}_N$ symmetries \cite{ayeniEfficientParticleconservingSymmetric2025}, and introducing specific symmetry-preserving gates for VQE \cite{gardEfficientSymmetrypreservingState2020}. This concept has been extended to compressing VQE ans\"atze by removing symmetry-breaking gates \cite{herasymenkoDiagrammaticApproachVariational2021}, constructing minimal complete operator pools for adaptive methods \cite{shkolnikovAvoidingSymmetryRoadblocks2023}, and building permutation-symmetric quantum neural networks \cite{schatzkiTheoreticalGuaranteesPermutationequivariant2024}. This approach is well-tested and has been shown to yield shallower circuits and reduce shot counts for adaptive methods \cite{shiraliBreakNotBreak2025, bertelsSymmetryBreakingSlows2022}, and mitigate barren plateaus \cite{schatzkiTheoreticalGuaranteesPermutationequivariant2024} in QNNs.

A complementary application of symmetry is \textbf{problem reduction}, which reduces the effective Hilbert space of the problem. The most prominent example is qubit tapering, first presented by Bravyi et al. to remove qubits based on $\mathbb{Z}_2$ symmetries \cite{bravyi2017taperingqubitssimulatefermionic}. This was later extended to molecular point group symmetries \cite{setiaReducingQubitRequirements2020}. Other reduction methods include mapping a local stoquastic Hamiltonian to one acting on a smaller symmetric subspace \cite{bringewattEffectiveGapsAre2020}, accelerating QAOA by reducing the cost Hamiltonian's Hilbert space \cite{shaydulinClassicalSymmetriesQuantum2021, shaydulinExploitingSymmetryReduces2021a, prakashAutomorphismAssistedQuantumApproximate2024}, and compressing quantum protocols into perturbative gadgets \cite{doddsPracticalDesignsPermutationsymmetric2019}.

Finally, symmetries have been exploited in other contexts, such as developing symmetry-projected bases for \textbf{efficient measurement} \cite{smartLoweringTomographyCosts2021} and developing efficient \textbf{classical simulation} algorithms using tensor networks for permutation-invariant systems \cite{anschuetzEfficientClassicalAlgorithms2023}.

\paragraph{Algorithmic Identification of Symmetries in Hamiltonians. }

Several distinct families of methods have been developed for the algorithmic identification of Hamiltonian symmetries. The most established are \textbf{algebraic and constraint-based methods}. Bravyi et al.~\cite{bravyi2017taperingqubitssimulatefermionic} present methods to algorithmically identify $\zsym$ symmetries from fermionic Hamiltonians by finding the kernel of a constructed check matrix. Setia et al.~\cite{setiaReducingQubitRequirements2020} review this method and introduce a new method to find molecular point group symmetries. Similarly, Varjas et al.~\cite{varjasQsymmAlgorithmicSymmetry2018} introduce the Qsymm framework to algorithmically find unitary symmetry generators from non-interacting Hamiltonians by solving linear commutation constraints. 

A second, related approach involves \textbf{graph-theoretic techniques}. Bringewatt et al.~\cite{bringewattEffectiveGapsAre2020} introduce techniques to map stoquastic Hamiltonians to vertex-coloured graphs to construct reduced Hamiltonians that act on a smaller symmetric subspace. Shaydulin et al.~\cite{shaydulinExploitingSymmetryReduces2021a} propose mapping the automorphism group in graph problems to permutation symmetries in its corresponding Ising cost Hamiltonian for QAOA applications. Similarly, Prakash \cite{prakashAutomorphismAssistedQuantumApproximate2024} also uses the automorphism group of the unweighted max-cut problem to partition edges into equivalence classes for cost Hamiltonian reduction. 

A recent approach is the use of \textbf{machine learning and data-driven methods}. Dierkes et al.~\cite{Dierkes_2023} introduce a Lie algebra framework that allows Hamiltonian neural networks to simultaneously learn both the symmetry group action and the total energy. Van der Ouderaa et al.~\cite{ouderaaNoethersRazorLearning2024}, for instance, present ``Noether's Razor", a Hamiltonian neural network that learns parametrised conserved quantities directly from classical trajectory data. Hou et al.~\cite{hou2024machinelearningsymmetrydiscovery} also present a framework to learn observables from classical time evolution data. 

Finally, various \textbf{numerical methods}  were established. Moudgalya et al.~\cite{moudgalyaNumericalMethodsDetecting2023} provide numerical techniques to find the commutant algebra of a Hamiltonian using simultaneous block diagonalisation and Liouvillian MPS methods. Ruzicka et al.~\cite{ruzickaConservedQuantitiesExceptional2021} introduce spectral and algebraic methods to identify conserved observables for non-Hermitian, PT-symmetric Hamiltonians. Others have developed methods to reveal symmetries that are masked by common transformations like the Jordan-Wigner transform \cite{leeuwenRevealingSymmetriesQuantum2024}.

\subsection{The Graph Automorphism Problem}\label{ssec:ga_lit}

The Graph Automorphism Problem (GA) is in the same complexity class as the Graph Isomorphism Problem (GI), as each is polynomial-time reducible to the other \cite{finkelsteinGroupsComputation1993, skresanovReductionGroupIsomorphism2025}. GI is one of the few problems in the complexity class NP that is not known to be either P (solvable in polynomial time) or NP-complete \cite{babaiGROUPGRAPHSALGORITHMS2019}. While early methods had moderately exponential bounds \cite{Luks1982, zemlyachenkoGraphIsomorphismProblem1985}, the fastest known algorithm for the general case is Babai's quasipolynomial-time algorithm \cite{babaiGraphIsomorphismQuasipolynomial2016}. This algorithm runs in time $O\left(\exp\left((\log n)^{O(1)}\right)\right)$ and remains the best-known upper bound for arbitrary graphs, thus bounding the complexity of GA in the most general case.

For special classes of graphs, however, significantly faster algorithms are known. Of particular relevance to this work is the class of \textbf{bounded-degree graphs}, where the maximum vertex degree $d$ is a constant. For these graphs, polynomial-time algorithms exist. Luks introduced a foundational algorithm based on recursive graph partitioning \cite{Luks1982} with a runtime of $O\left(n^{O(d/\log{d})}\right)$. This bound has been subsequently improved by related techniques to $O(n^{\text{polylog}(d)})$ \cite{neuenPowerAlgorithmicApproaches2019, groheFasterIsomorphismTest2023}.

In practice, GA is often solved by finding a \textbf{canonical labelling} of the graph, a problem to which GA is also reducible \cite{katebiGraphSymmetryDetection}. Modern solvers, such as Nauty \cite{nautypaper}, Bliss \cite{bliss_paper1}, and Saucy \cite{katebiGraphSymmetryDetection}, implement backtracking search trees based on individualisation-refinement algorithms \cite{stoichevNewExactHeuristic2019a}. While these practical solvers retain a worst-case exponential runtime \cite{nautypaper, stoichevNewExactHeuristic2019a}, they are known to be efficient for the majority of graphs encountered in practice.

\subsection{Unaddressed Research Gaps}

This work is motivated by primary deficiencies in existing methods.

First, the algorithmic identification of symmetries is effective for predefined, a priori known symmetry classes. Robust methods exist for discrete $\zsym$ symmetries \cite{bravyi2017taperingqubitssimulatefermionic}, molecular point group symmetries \cite{setiaReducingQubitRequirements2020}, and continuous groups like $U(1)$ and $SU(2)$, which are typically found by checking commutation with known generators. Methods for non-interacting Hamiltonians are similarly limited by their reliance on linear commutation constraints \cite{varjasQsymmAlgorithmicSymmetry2018}. A general-purpose, group-agnostic algorithm to find the full permutation group $\gham$ of an arbitrary, interacting Pauli Hamiltonian is notably lacking.

Second, existing graph-theoretic techniques are not general. These methods are specialised to specific problems such as stoquastic Hamiltonians \cite{bringewattEffectiveGapsAre2020} or specific Ising Hamiltonians used as cost functions for graph problems such as MaxCut \cite{prakashAutomorphismAssistedQuantumApproximate2024, shaydulinExploitingSymmetryReduces2021a}. A general isomorphism mapping an arbitrary Pauli Hamiltonian $\ham$ to a graph $\graph$ such that $\gham \cong \autg$ is absent from the literature.

Finally, recent numerical and machine learning methods are indirect or non-deterministic. Data-driven approaches, for example, rely on classical trajectory data to learn conserved quantities rather than deterministically identifying the symmetry group from the Hamiltonian's structure \cite{ouderaaNoethersRazorLearning2024, hou2024machinelearningsymmetrydiscovery, Dierkes_2023}. Conversely, numerical methods that compute the full commutant algebra solve a far more general and computationally expensive problem than identifying the specific permutation subgroup $\gham$.

In summary, the literature lacks a general, deterministic, and structurally exact algorithm to identify the full permutation symmetry group $\gham$ of an arbitrary Pauli Hamiltonian. This paper proposes to address this gap by introducing a novel graph-theoretic isomorphism for the algorithmic identification of the permutation symmetry group from Hamiltonians represented as weighted sums of Pauli strings. 

\section{Mathematical Preliminaries}

\subsection{Symmetries}\label{ssec:sym}

\paragraph{Hamiltonian.} A Hamiltonian is an operator that represents the total energy of a physical system. In this paper, we consider a Hamiltonian represented as 

\begin{equation}\label{eq:hamdef}
    \ham=\sum_jc_j p_j,
\end{equation}
where $c \in \mathbb{C} ~\backslash~ \{0\}$, and each $p$ belongs to the Pauli group $P_m = \pm\{I, \sigma^x, \sigma^y, \sigma^z\}^{\otimes n}$ for $n$ qubits. Since it is known that Pauli strings provide a complete basis of operators on the Hilbert space of N qubits \cite{Lawrence_2002}, this construction generalises to any qubit Hamiltonian. 
A Hamiltonian $\ham$ acting on an $n$-site system is called $k$-local if it can be expressed as a sum of terms
\begin{equation}
    \ham = \sum_iH_i,
\end{equation}
where each $H_i$ acts non-trivially on at most $k$ sites. The interaction graph $G=(V,E)$ of the Hamiltonian $\ham$ is defined by associating a vertex for each site in $V \in \{1, ..., n\}$, and an edge $(u,v) \in E$ whenever sites $u, v$ jointly appear in the support of some term $H_i$. The Hamiltonian has bounded interaction degree $d$ if the degree of every vertex in $G$ satisfies
\begin{equation}
    \deg(v) \le d\quad\forall ~v \in V,
\end{equation}
i.e., each site interacts with at most $d$ other sites.
Hamiltonians representing physical systems are known to be local \cite{lloyd1996} and have a bounded interaction degree \cite{aharonov2003adiabaticquantumstategeneration}. This is observed in common physical systems like the Ising and Heisenberg models \cite{Ising1925, Lenz1920, Heisenberg1928}.

\paragraph{Symmetries. } Let $X$ be a mathematical object. A symmetry transformation is a mapping $g: X \to X$ that preserves a relevant property of $X$, i.e., that property is invariant to $g$. The set of all transformations that preserve this property forms a \textit{group} \cite{cantwell2016introduction}.
An operator $U$ is a symmetry of Hamiltonian $\ham$ if it leaves the Hamiltonian invariant, i.e., $U\ham U^\dagger =\ham$, which is equivalent to the commutation relation \cite{corry2017symmetry}: \begin{equation}\label{eq:symcom}[\ham, U] = 0.\end{equation}

\paragraph{Permutation Symmetry.}
Permutation invariance is a discrete symmetry where the system is invariant upon the exchange of particles. For example, if a two-qubit system with state $\ket{k}\ket{k'}$ is invariant to $P_{12}$, where $P_{ij}$ corresponds to an exchange of qubits $i$ and $j$, then $\ket{k}\ket{k'}$, $\ket{k'}\ket{k}$, and, in general, any linear combination $c_1\ket{k}\ket{k'} + c_2\ket{k'}\ket{k}$ lead to an identical set of eigenvalues upon measurement \cite{Sakurai_modern_qm2010}. In this paper, we use the cyclic permutation notation, where successive elements are transposed in a cyclic manner. For example, $$(\alpha\quad\beta\quad\gamma)$$ is a $3$-cycle that corresponds to the transpositions $\alpha \to \beta, \beta \to \gamma, \gamma \to \alpha$.
Using the Hamiltonian representation given in Eq.~\eqref{eq:hamdef} with $n$ qubits and $m$ Pauli strings, we represent $\ham$ by the set of its (coefficient, Pauli string) pairs: \begin{equation}
    T = \{(c_r, p_r) : r \in [1, m]\}.
\end{equation}
We assume that for each Pauli string $p$, there is at most one pair $(c, p)$, and that each $p_r$ is chosen from $\{I, X, Y,Z\}^{\otimes n}$ (i.e., scalar phases are absorbed into the coefficient).
Define a permutation operator $\vs \in S_n$, the symmetric group on $n$ elements (representing qubits). The action of $\vs$ on a Pauli string $p = \bigotimes_{i=1}^n\sigma_\alpha^i$ is defined as:
\begin{equation}
    \vs(p) = \bigotimes_{i=1}^n\sigma_\alpha^{\vs^{-1}(i)} = \bigotimes_{j=1}^n\sigma_\alpha^{j}  \text{ on qubit $\vs(j)$}.
\end{equation}
For example, if $\vs = (1\quad3)$, then $\vs(XIY) = YIX$. Define the symmetry group $\gham$ as the set of permutations that leave $\ham$ invariant, i.e., 
\begin{equation}
    \gham = \{\vs \in S_n : \vs(\ham) = \ham\}.
\end{equation}
This is equivalent to $\vs(T) = T$, as $\vs$ is a permutation and $T$ is finite.

\subsection{Graph Automorphisms and Isomorphisms}\label{ssec:ga_def}

Let $\graph = (V, E, C_V, C_E)$ be a vertex- and edge-coloured graph as constructed in Sec. \ref{ssec:constr}. An automorphism of $\graph$ is a permutation $\phi$ of the vertex set $\phi: V \to V$ that preserves all structural properties and colouring of the graph \cite{cameronAutomorphismsGraphs2004}. Specifically, a permutation $\phi$ is an automorphism iff it satisfies the following conditions for all t $v \in V$ and all edges $(u,v) \in E$:

\begin{enumerate}[(1)]
    \item \textbf{Adjacency Preservation}: $(u, v) \in E \iff (\phi(u), \phi(v)) \in E$.
    \item \textbf{Vertex Colour Preservation}: $C_V(v)  = C_V(\phi(v))$.
    \item \textbf{Edge Colour Preservation}: $C_E((u,v)) = C_E((\phi(u),\phi(v)))$.
\end{enumerate}
The set of all such automorphisms for a graph $\graph$ forms a group under the operation of function composition, with the trivial identity permutation mapping all vertices to themselves. This group is known as the automorphism group of $\graph$, and is denoted by $\autg$ in this paper. $\autg$ is a subgroup of the symmetric group $S_{|V|}$ acting on vertices $V$ \cite{cameronAutomorphismsGraphs2004}. The computational problem of finding this group is the Graph Automorphism (GA) problem, which is known to be in the same complexity class as the Graph Isomorphism (GI) problem \cite{finkelsteinGroupsComputation1993, skresanovReductionGroupIsomorphism2025}. Some existing methods to solve this problem are discussed in Sec. \ref{ssec:ga_lit}.

More generally, two coloured bipartite graphs $\graph_1 = (V_1, E_1, C_{V_1}, C_{E_1}), \graph_2 = (V_2, E_2, C_{V_2}, C_{E_2})$ are isomorphic (i.e., $\graph_1 \cong \graph_2$ ) if there exists a bijection $\phi: V_1 \to V_2$ that preserves adjacency and colouring.

\subsection{Transverse Field Ising Model}

The Ising model was introduced by Lenz and Ising to theoretically characterise magnetic phase transitions in ferromagnetic lattice materials \cite{Lenz1920, Ising1925}. The transverse Ising model (TFIM) is a quantum analogue of the Ising model. It features a lattice with $Z_iZ_j$ nearest-neighbour interactions as well as $X_j$ terms representing interaction with an external transverse magnetic field $\Omega$ \cite{strecka2015briefaccountisingisinglike}. The TFIM is thus most generally characterised by the equation:

\begin{equation}\label{eq:gen_tfim}
    \ham = -\sum_{\langle i j\rangle} J_{ij} Z_{i} Z_{j} - \sum_j\Omega_j X_j,
\end{equation}
where $\langle ij \rangle$ denotes nearest-neighbour interactions.

\subsubsection{1D TFIM}

\paragraph{Homogenuous TFIMs.}
In the homogeneous case, all interaction strengths and fields are assumed uniform, i.e., $J_{ij}= J$, $\Omega_j = \Omega$ for all $i, j$. Therefore, the equation \eqref{eq:gen_tfim} reduces to the following equation:

\begin{equation}\label{eq:homo_tfim}
    \ham = -J\sum_{\langle i j\rangle} Z_{i} Z_{j} - \Omega \sum_j X_j.
\end{equation}
It can be shown that the 1D homogeneous TFIM with $n$ sites and periodic boundary conditions exhibits the symmetry group $D_n$ of order $2n$, which corresponds to the symmetries of a regular $n$-gon. Let the set of sites be $S =\{1, ..., n\}$. The symmetry group $\gham$ is a subgroup of the symmetry group $S_n$, which may be generated by two elements:

\begin{enumerate}[(1)]
    \item \textbf{Rotation} ($R_s$): The operation of shifting each site by $s$. $R_1: S \to S$ where $R_1(k) = (k+s)\mod n$. This corresponds to the generator 
    \begin{equation}\label{eq:gener_trans}
        (1\quad2\quad\dots\quad n)
    \end{equation}
    if $s, n$ are coprime, i.e., $\gcd(s,n) = 1$.
    \item \textbf{Reflection} ($P_r$): The operation of reflecting the system with the axis of symmetry passing through $r$. For odd $n$, $r$ corresponds to a site. For even $n$, $r$ corresponds to the midpoint of two sites. $P: S \to S$ where $P(k) = (2r-k) \mod n$. This corresponds to the generator
    \begin{equation}\label{eq:gener_refl}
        (1\quad 2r-1) (2\quad 2r-2)\dots
    \end{equation}
\end{enumerate}
It is also known that for open boundary conditions, the 1D homogeneous TFIM only exhibits the $C_2$ symmetry group, which corresponds to the cyclic group of 2. The reflection operator $P_{(n-1)/2}$ generates the symmetry group $\gham$.

\paragraph{Inhomogeneous TFIMs.}

In the inhomogeneous case, interaction strengths between different pairs of nearest neighbours may vary, possibly due to impurities \cite{chaudhari2018analyticalsolutionsinhomogeneoustransverse}. We consider the case where the coupling strength $J_{ij}$ is distinct for each nearest-neighbour interaction. Due to the distinct interactions between sites, such a system does not have any non-trivial permutation symmetries by construction.

\subsubsection{2D TFIM}

In the 2D homogeneous case, all interaction strengths and fields are assumed uniform. In addition, each site has up to four interaction terms (corresponding to the four nearest neighbours). We restrict our analysis to the simple case of the square lattice.
\begin{figure}[H]
    \centering
    \begin{tikzpicture}[
    mycircle/.style = {
        draw,
        circle,
        minimum size = 1cm, 
        font = \Large
    }
  ]

  \node[mycircle] (1) at (0, 2) {1};
  
  \node[mycircle] (2) at (2, 2) {2};
  
  \node[mycircle] (3) at (2, 0) {3};
  
  \node[mycircle] (4) at (0, 0) {4};

  \draw (1) -- (2); 
  \draw (3) -- (4); 
  \draw (1) -- (4); 
  \draw (2) -- (3); 

\end{tikzpicture}
    \caption{A square lattice corresponding to a 2D TFIM}
    \label{fig:2d_tfim}
\end{figure}
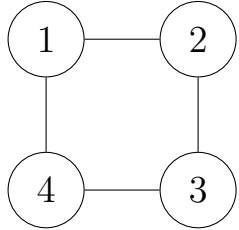
Equation \eqref{eq:gen_tfim} thus reduces to the following equation:

\begin{equation}\label{eq:2d_tfim}
    \ham=-J\left(Z_1Z_2 + Z_2Z_3 + Z_3Z_4 + Z_4Z_1\right) - \Omega\left(X_1 + X_2 + X_3 + X_4\right).
\end{equation}
It can be shown that Eq.~\eqref{eq:2d_tfim} exhibits the symmetry group $D_4$ of order 8, which corresponds to the symmetries of a square. The symmetry group $\gham$ is a subgroup of the symmetry group $S_4$ that may be generated by two elements:

\begin{enumerate}[(1)]
    \item \textbf{Horizontal Reflection} ($P_{1.5}$): The operation of reflecting the system along the vertical axis through the midpoint of sites 1 and 2. This corresponds to the generator
    \begin{equation}
        (1\quad2)(3 \quad4).
    \end{equation}
    \item \textbf{Vertical Reflection} ($P_{2.5}$): The operation of reflecting the system along the horizontal axis through the midpoint of sites 2 and 3. This corresponds to the generator
    \begin{equation}
        (1\quad4)(2 \quad3).
    \end{equation}
\end{enumerate}

\subsection{Heisenberg Model}

The Heisenberg model \cite{Heisenberg1928} is used in the study of quantum phase transitions in spin systems \cite{Joel2012}. Here, we consider a system composed of $n$ sites, which is described by the Hamiltonian:

\begin{equation}\label{eq:gen_heis_ham}
    \ham = -J\sum_{\langle i j\rangle} X_iX_j + Y_iY_j + Z_iZ_j,
\end{equation}
where $\langle i j\rangle$ denotes neighbour interactions. 
Furthermore, we consider the case where there is interaction between every pair of sites. This is also known as the mean-field Heisenberg XXX model \cite{alon2018meanfieldquantumheisenbergferromagnet}. Therefore, equation \eqref{eq:gen_heis_ham} reduces to: 

\begin{equation}\label{eq:gen_mf_heis}
    \ham = -J\sum_{i \ne j} X_iX_j + Y_iY_j + Z_iZ_j.
\end{equation}
Due to the uniform fully-connected nature of the system, it can be shown that Eq.~\eqref{eq:gen_mf_heis} exhibits the symmetry group $S_n$, which corresponds to the set of all bijective functions from a set to itself, i.e., $\gham \cong S_n$. For a set of sites $S = \{1, ..., n\}$, this group is generated by the set of adjacent transposition operators $\{T_i : i = 1, ... , n-1\}$, where each operator $T_i$ swaps sites $i$ and $i+1$, i.e.,

\begin{equation}
    T_i(k)= \begin{cases}
    i + 1 & \text{if } k = i\\
    i & \text{if } k = i + 1\\
    k & \text{Otherwise}.
    \end{cases}
\end{equation}
These generators can be expressed explicitly as $(1\quad2), (2\quad 3), ..., (n-1\quad n)$.

\section{Methodology}

\subsection{Graph Construction}\label{ssec:constr}

From the permutation operator representation described in Sec. \ref{ssec:sym}, we construct a coloured bipartite graph $\graph = (V, E, C_V, C_E)$ as follows:

\begin{itemize}
    \item $V = V_Q ~\cup~ V_T$ representing the $n$ qubits, where $V_Q = \{q_1, ..., q_n\}$ (the set of qubit nodes), and $V_T = \{t_1, ..., t_m\}$ (the set of term nodes), with a bijection $f: T \to V_T$ mapping $(c_r, p_r) \to t_r$.

    \item $E \subseteq V_Q \times V_T$. An edge $(q_i, t_r)$ exists iff the Pauli operator of $p_r$ at qubit $i$ acts non-trivially (i.e., not an identity).
    \begin{equation}
        E = \{(q_i,t_r) : [p_r]_i \ne I\}
    \end{equation}

    \item $C_V: V \to \text{Colours}_V$
    \begin{itemize}
        \item For $q_i \in V_Q$, $C_V(q_i) = \text{``qubit"}$.
        \item For $t_r \in V_E$, $C_V(t_r) = \text{(``term",$~c_r$) }$.
    \end{itemize}
    \item $C_E: E \to \{X, Y, Z\}$. For an edge $e = (q_i, t_r) \in E, ~C_E(e) = [p_r]_i$
\end{itemize}
As an example, consider the Hamiltonian with real-valued coefficients:
\begin{equation}\label{eq:exampleforgraph}
    \ham = -1 \cdot IXY + 1\cdot  YZZ - 2\cdot YXI + 2\cdot ZIZ - 3\cdot XXI + 3 \cdot YZI.
\end{equation}

\noindent Its corresponding graph is constructed as shown in Figure \ref{fig:exampleforgraph}.
\begin{figure}[t]
    \centering
    \includegraphics[width=\linewidth]{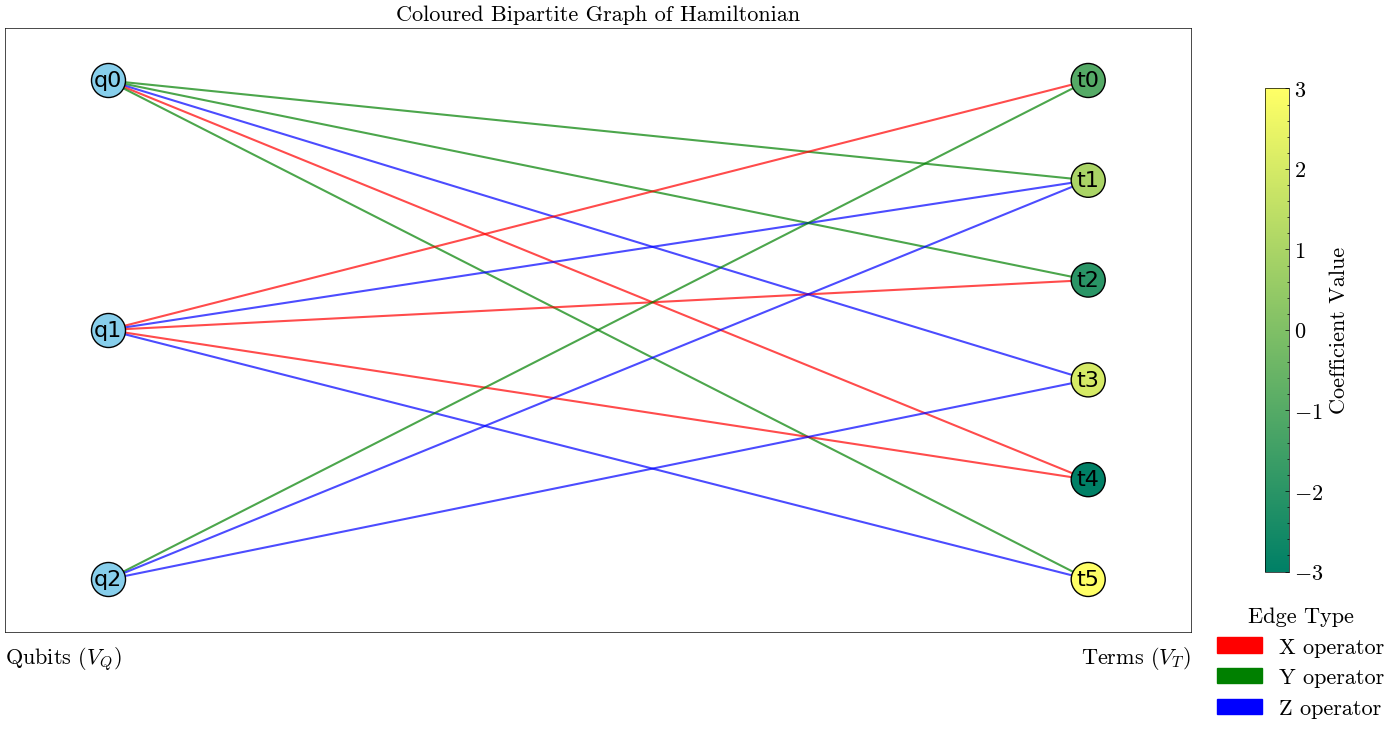}
    \caption{Coloured Bipartite Graph for Eq.~\eqref{eq:exampleforgraph}. The graph is constructed by creating nodes for the qubits and the terms in a bipartite manner, and adding edges between a term and a qubit if the term acts non-trivially on the qubit. Then, a colouring is assigned to the nodes and edges as described in Sec.~\ref{ssec:constr}.}
    \label{fig:exampleforgraph}
\end{figure}
We propose that finding the set of permutation operators is equivalent to finding the automorphism of this coloured bipartite graph, i.e.,
\begin{equation}
    \gham \cong \autg
\end{equation}
We further propose that this construction is a faithful representation, i.e., determining if two Hamiltonians are permutation-equivalent is equivalent to determining if their corresponding graphs are isomorphic. A proof of these propositions may be found in the following section \ref{ssec:proof}. Therefore, if we can compute the automorphism group of the graph $\graph$, we can find the Hamiltonian's group of permutation symmetries $\gham$. Solvers, such as Nauty or Traces \cite{nautypaper}, may be used to efficiently find the automorphism group of the constructed graph.

\subsection{Proofs of Isomorphism \& Faithful Representation}\label{ssec:proof}

In this section, we prove that finding the set of permutation operators is equivalent to finding the automorphism of the coloured bipartite graph as constructed in Sec. \ref{ssec:constr}. We also prove that the constructed graph is a faithful representation of the permutation equivalence of the Hamiltonian $\ham$.

\begin{prop}
    The permutation symmetry group of the Hamiltonian $\ham$ is isomorphic to the automorphism group of the graph $\graph$ constructed from $\ham$ as Sec. \ref{ssec:constr}.
    \begin{equation*}
        \gham \cong \autg
    \end{equation*}
\end{prop}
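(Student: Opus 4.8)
The plan is to construct an explicit isomorphism $\Phi: \gham \to \autg$ and show it is a well-defined group homomorphism that is bijective. The natural candidate arises directly from the graph construction: any qubit permutation $\vs \in S_n$ acts on the qubit nodes $V_Q$ in the obvious way, $q_i \mapsto q_{\vs(i)}$. The key insight is that $\vs$ also induces a permutation of the term nodes $V_T$ whenever $\vs \in \gham$: since $\vs(T) = T$, applying $\vs$ to the Pauli string $p_r$ yields some $p_{r'}$ with a matching coefficient, so we may define the induced action on $V_T$ by $t_r \mapsto f(\vs((c_r, p_r)))$. I would define $\Phi(\vs)$ to be the vertex permutation acting as $q_i \mapsto q_{\vs(i)}$ on $V_Q$ and as $t_r \mapsto t_{r'}$ on $V_T$.

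The core of the proof is showing that $\Phi(\vs)$ is genuinely a graph automorphism for each $\vs \in \gham$, i.e.\ that it satisfies the three conditions in Sec.~\ref{ssec:ga_def}. First I would verify \textbf{vertex colour preservation}: qubit nodes map to qubit nodes and term nodes to term nodes (the map is bipartite-respecting by construction), and term colours are preserved precisely because $\vs \in \gham$ forces the coefficient $c_r$ to equal $c_{r'}$. Next, \textbf{adjacency and edge-colour preservation} follow together from the definition of the permutation action on Pauli strings: an edge $(q_i, t_r)$ with colour $[p_r]_i = \sigma_\alpha \ne I$ is sent to $(q_{\vs(i)}, t_{r'})$, and since $p_{r'} = \vs(p_r)$ places $\sigma_\alpha$ on qubit $\vs(i)$, the image edge exists in $E$ and carries the same colour $\sigma_\alpha$. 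The iff direction of adjacency is handled by noting $\Phi(\vs^{-1})$ provides the inverse map. I would then check that $\Phi$ is a homomorphism, $\Phi(\vs_1 \vs_2) = \Phi(\vs_1)\Phi(\vs_2)$, which reduces to the compatibility of composition on both $V_Q$ and $V_T$, and that $\Phi$ is injective since $\Phi(\vs)$ already determines $\vs$ through its action on $V_Q$.

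Surjectivity is the step I expect to be the main obstacle and requires the converse construction: given an arbitrary $\phi \in \autg$, I must extract a qubit permutation $\vs \in \gham$ with $\Phi(\vs) = \phi$. The essential point is that the vertex colouring forces $\phi$ to preserve the bipartition, so $\phi$ restricts to a permutation $\vs$ of $V_Q \cong \{1,\dots,n\}$ and independently to a permutation of $V_T$. The delicate part is verifying that this restricted $\vs$ actually lies in $\gham$, i.e.\ that $\vs(T) = T$. I would argue that for each term node $t_r$, its incident edges together with their $X/Y/Z$ colours completely encode the Pauli string $p_r$ (the edge set recovers the support, the colours recover the operators), and the term-node colour encodes $c_r$. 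Because $\phi$ preserves all adjacencies, edge colours, and the term colour $(\text{``term''}, c_r)$, the image term node $\phi(t_r)$ must correspond to exactly the Pauli string $\vs(p_r)$ with the same coefficient $c_r$; hence $\vs(p_r) = p_{\phi(t_r)}$ with matching coefficient, so $\vs$ permutes the pairs of $T$ among themselves and $\vs \in \gham$. Finally I would confirm $\Phi(\vs) = \phi$ on both vertex classes, completing the bijection.

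The only subtlety to flag is the assumption stated in Sec.~\ref{ssec:sym} that distinct terms carry distinct Pauli strings, which guarantees the map $f: T \to V_T$ is a bijection and that the correspondence between term nodes and Pauli strings is one-to-one; without it, the reconstruction of $\vs$ from $\phi$ could be ambiguous. I would state this explicitly where it is used so that the encoding argument in the surjectivity step is airtight.
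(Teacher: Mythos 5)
Your proposal is correct and follows essentially the same route as the paper's proof: the same explicit map $\vs \mapsto \phi_\vs$ acting as $q_i \mapsto q_{\vs(i)}$ on $V_Q$ and as $(c_r,p_r)\mapsto(c_r,\vs(p_r))$ on $V_T$, the same verification of the three automorphism conditions plus homomorphism and injectivity via the $V_Q$ action, and the same surjectivity argument reconstructing $\vs(p_r)$ from the preserved edge set and edge colours (the paper's two-case analysis on $[p_r]_j \ne I$ versus $[p_r]_j = I$). Your explicit flagging of the distinct-Pauli-strings assumption is a small improvement in rigour, since the paper states it only in Sec.~3.1 without citing it where the bijection $f$ is used.
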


\begin{proof}
An automorphism of $\graph$ is a bijection $\phi: V \to V$ that preserves all graph structures and colours as detailed in Sec. \ref{ssec:ga_def}. From vertex colour preservation, any $\phi \in \autg$ must map $V_Q \rightarrow V_Q$ and $V_T \rightarrow V_T$, as their colours are distinct.

We define a map $\Psi: \gham \to \autg$ and prove it is a group isomorphism. For any $\vs \in \gham$, define $\Psi(\vs) =\phi_\vs$, where $\phi_\vs: V \to V$ is a bijection defined as:
\begin{itemize}
    \item $V_Q: \phi_\vs(q_i) = q_{\vs(i)}$
    \item $V_T$: Since $\vs \in \gham$, for every term node $t_r = (c_r, p_r) \in T$, the permuted term $t_s \coloneq (c_r, \vs(p_r))$ is also in $T$.
\end{itemize}

The map $\phi_\vs$ is a bijection, as $\vs$ is a bijection on $V_Q$ and $\vs$ action on $T$ (and thus $V_T$) is also a bijection. 

To prove that $\gham \cong \autg$, it suffices to show that:
\begin{enumerate}[(1)]
    \item $\Psi$ is well-defined, i.e., $\gham \subseteq \autg$
    \item $\Psi$ is a homomorphism, i.e., for any $\vs_1, \vs_2 \in \gham, \phi_{\vs_1 \circ \vs_2} = \phi_{\vs_1} \circ \phi_{\vs_2}$
    \item $\Psi$ is injective
    \item $\Psi$ is surjective, i.e., $\gham \supseteq \autg$
\end{enumerate}

\textbf{Part 1: $\Psi$ is well-defined, i.e., $\gham \subseteq \autg$}
\begin{addmargin}[4em]{2em}
    We show that $\phi_{\vs}$ is a group automorphism for any $\vs \in \gham$.

    \paragraph{Adjacency preservation. } An edge $(q_i, t_r)$ exists iff the Pauli operator $[p_r]_i \ne I$. Similarly, the permuted edge $(\vs(q_i), \vs(t_r))  = (q_j, t_s )$ exists iff $[p_s]_j \ne I$. By definition of the permutation action on a Pauli string $\vs(p)$, we have $[p_s]_j = [\vs(p_r)]_{\vs(i)} = [p_r]_i$. Thus, the condition $[p_r]_i \ne I$ holds iff $[p_s]_{\vs(i)} \ne I$. This demonstrates that $(q_i, t_r) \in E \iff (\vs(q_i), \vs(t_r))$, satisfying the adjacency preservation condition.

    \paragraph{Vertex colour preservation. } For any qubit node $q_i \in V_Q$, the colour $C_V(q_i) =\text{``qubit"} = C_V(q_{\vs(i)})$. For any term node $t_r \in V_T$ corresponding to $(c_r, p_r)$, the map $\phi_\vs$ yields $t_s$ corresponding to $(c_r, \vs(p_r))$. The colour $C_V(t_r) = (\text{``term"}, c_r)$ is identical to $C_V(t_s) = (\text{``term"}, c_r)$ because $\vs \in \gham$ preserves the coefficient $c_r$ of the permuted term. Thus, vertex colours are preserved for all $v \in V$.

    \paragraph{Edge colour preservation. } By construction, the edge colour of $(q_i, t_r) \in E$ is $C_E((q_i, t_r)) = [p_r]_i$. The colour of the permuted edge is $C_E((\vs(q_i), \vs(t_r))) = C_E((q_j, t_s)) = [p_s]_j$. As established in our proof of adjacency, $[p_s]_j = [p_r]_i$. Therefore, edge colours are preserved.\\

    Since $\phi_\vs$ preserves adjacency, vertex colours, and edge colours, it is a valid graph automorphism by definition. This confirms that $\Psi$ is a well-defined map $\Psi:\gham \to \autg$.
\end{addmargin}\vspace{1em}

\textbf{Part 2: $\Psi$ is a homomorphism, i.e., for any $\vs_1, \vs_2 \in \gham, \phi_{\vs_1 \circ \vs_2} = \phi_{\vs_1} \circ \phi_{\vs_2}$}
\begin{addmargin}[4em]{2em}
    Let $\vs_1, \vs_2 \in \gham$. We show that $\phi_{\vs_1 \circ \vs_2} = \phi_{\vs_1} \circ \phi_{\vs_2}$ for both sets of vertices $V_Q, V_T$.

    \paragraph{Qubit nodes. } Consider the action on an arbitrary qubit node $q_i \in V_Q$. By definition, the composition $(\phi_{\vs_1} \circ \phi_{\vs_2})(q_i)$ is evaluated as $\phi_{\vs_1}(\phi_{\vs_2}(q_i))$, which simplifies to $\phi_{\vs_1}\left(q_{\vs_1(i)}\right) = q_{\vs_1(\vs_2(i))} = q_{(\vs_1 \circ \vs_2)(i)}$, which is the definition of $\phi_{\vs_1 \circ \vs_2}(q_i)$.

    \paragraph{Term nodes. } Consider the action on an arbitrary qubit node $t_r \in V_T$ that corresponds to the pair $(c_r, p_r)$. The composite action $(\phi_{\vs_1} \circ \phi_{\vs_2})(t_r)$ is evaluated as $\phi_{\vs_1}(\phi_{\vs_2}(t_r))$. By definition, $\phi_{\vs_2}(t_r)$ maps to the node $t_s \leftrightarrow (c_r, \vs_2(p_r))$. Applying $\phi_{\vs_1}$ gives $t_u \leftrightarrow (c_r, \vs_1(p_s)) = (c_r, \vs_1(\vs_2(p_r))) = (c_r, (\vs_1\circ\vs_2)(p_r))$. This final term, by definition, is the node $t_u$ that $\phi_{\vs_1 \circ \vs_2}(t_r)$ maps to.\\

    Since the actions of $\phi_{\vs_1 \circ \vs_2}$ and $\phi_{\vs_1} \circ \phi_{ \vs_2}$ are identical on $V = V_Q \cup V_T$, the automorphisms are equal, and $\Psi$ is a group homomorphism.
\end{addmargin}\vspace{1em}

\textbf{Part 3: $\Psi$ is injective}
\begin{addmargin}[4em]{2em}
    Suppose $\Psi(\vs)$ is the identity automorphism, i.e., $\phi_\vs = \text{id}$. Then, for any $q_i \in V_Q, ~\phi_\vs(q_i) =q_i$. By definition, $\phi_\vs(q_i) = q_{\vs(i)} \implies \vs(i) = i ~\forall~i \in [1, n]$. This means that $\vs$ is the identity permutation $e \in S_n$. The kernel of $\Psi$ is trivial. Therefore, $\Psi$ is injective by definition.
\end{addmargin}\vspace{1em}

\textbf{Part 4: $\Psi$ is surjective, i.e., $\gham \supseteq \autg$}
\begin{addmargin}[4em]{2em}
     We show that for any arbitrary automorphism $\phi \in \autg$, there exists a permutation symmetry $\vs \in \gham$ such that $\Psi(\vs) = \phi$.

     Let $\phi \in \autg$ be an arbitrary automorphism. Because $\phi$ must preserve vertex colours, and $V_Q$ and $V_T$ have distinct colours, $\phi$ must map these partitions to themselves, i.e., $\phi(V_Q)=V_Q$ and $\phi(V_T)=V_T$. We can therefore define a permutation $\vs \in S_n$ based on the action of $\phi$ on the qubit vertices: $\phi(q_i)= q_{\vs(i)}$ for all $i\in[1,n]$.

     We show that this $\vs$ is a valid permutation symmetry of the Hamiltonian, i.e., $\vs \in \gham$. To this end, we show that for any term $(c_r, p_r) \in T$, the permuted term $(c_r, \vs(t_r)) \in E$.

     Consider the node $t_r \in V_T$ corresponding to $(c_r, p_r)$. By vertex colour preservation, $C_V(t_r) = C_V(t_s)$. This implies that their coefficients are identical, i.e., $c_r = c_s$, so $t_s$ corresponds to a term $(c_r,p_s) \in T$. To prove $\vs \in \gham$, we show that $p_s  = \vs(p_r)$. This equality holds iff $[p_s]_{\vs(j)} = [p_r]_j$ for all $j \in [1, n]$. We verify this by inspecting the graph structure for an arbitrary qubit index $j$:

     \paragraph{Case 1: $[p_r]_j \ne I$. } An edge $(q_j, t_r)$ exists in the graph. By adjacency preservation, an edge $(\phi(q_j), \phi(t_r)) = (q_{\vs(j)}, t_s)$ must also exist.  It is known that the colour of the edge $C_E((q_j, t_r)) = [p_r]_j$ and the colour of the permuted edge $C_E((\phi(q_j), \phi(t_r))) = [p_s]_{\vs(j)}$. By edge colour preservation, $C_E((q_j, t_r)) = C_E((\phi(q_j), \phi(t_r))) = C_E((q_{\vs(j)}, t_s))$. Therefore, $[p_r]_j =[p_s]_{\vs(j)}$.

     \paragraph{Case 2: $[p_r]_j = I$. } No edge $(q_j, t_r)$ exists in the graph. Thus, by adjacency preservation, no edge $(\phi(q_j), \phi(t_r))$ can exist. The non-existence of these edges implies $[p_s]_{\vs(j)} = I = [p_r]_j$.\\

     In both cases, we show that $[p_s]_{\vs(j)} = [p_r]_j$. Therefore, for any term $(c_r, p_r) \in T$, the automorphism $\phi$ maps its corresponding node $t_r$ to the node $t_s \leftrightarrow (c_r, \vs(p_r))$. This proves that the set of terms $T$ is invariant under the action of $\vs$, and thus $\vs \in \gham$.

     Given a $\vs \in \gham$, we confirm that $\Psi(\vs) = \phi$. By definition, $\Psi(\vs) = \phi_\vs$. The action of $\phi_\vs$ on $V_Q$ is $\phi_\vs(q_i) = q_{\vs(i)}$, which is identical to $\phi(q_i)$ by definition of $\vs$. The action on $V_T$ is $\phi_\vs(t_r) = t_s \leftrightarrow (c_r, \vs(p_r))$, which as proved, is identical to the action of $\phi(t_r)$. since $\phi$ and $\phi_\vs$ agree on all vertices $V = V_Q \cup V_T$, they are the same automorphism. Therefore, $\Psi$ is surjective.
\end{addmargin}\vspace{1em}
From the four parts above, we prove that $\gham$ is isomorphic to $\autg$
\end{proof}

Next, we show that the graph $\graph$ constructed as Sec. \ref{ssec:constr} provides a faithful representation of the permutation equivalence of the Hamiltonian $\ham$. Two $n$-qubit Hamiltonians $\ham_1, \ham_2$ represented by their sets of (coefficient, Pauli string) pairs $T_1, T_2$ are permutation equivalent iff there exists a permutation $\vs \in S_n$, the symmetric group on $n$ qubits, such that the action of $\vs$ on $T_1$ yields $T_2$. The mathematical statement is
\begin{equation}
    \ham_1 \simp \ham_2 \iff \exists~ \vs \in S_n \text{  such that } \vs(T_1) = T_2.
\end{equation}

\begin{prop}
The graph $\graph$ constructed in Sec. \ref{ssec:constr} is a faithful representation of the permutation equivalence of the Hamiltonian $\ham$, i.e.,
\begin{equation*}
\ham_1 \sim_P\ham_2 \iff \graph_1 \cong \graph_2.
\end{equation*}
\end{prop}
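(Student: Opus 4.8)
The plan is to prove both directions of the biconditional by adapting the bijection-construction machinery of the preceding proposition, now relating two distinct graphs rather than a single graph to itself. Throughout I write $T_1, T_2$ for the (coefficient, Pauli string) sets of $\ham_1, \ham_2$ and $\graph_1 = (V_1, E_1, C_{V_1}, C_{E_1})$, $\graph_2 = (V_2, E_2, C_{V_2}, C_{E_2})$ for their constructed graphs, with qubit and term partitions $V_Q^{(k)}, V_T^{(k)}$. Both Hamiltonians act on the same $n$ qubits, so the qubit-node counts automatically agree.

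For the forward direction ($\simp \implies \cong$), suppose $\vs \in S_n$ satisfies $\vs(T_1) = T_2$. I would define a candidate bijection $\phi: V_1 \to V_2$ exactly as $\phi_\vs$ was defined before: on qubit nodes $\phi(q_i) = q_{\vs(i)}$, and on a term node $t_r \leftrightarrow (c_r, p_r) \in T_1$ let $\phi(t_r)$ be the node of $\graph_2$ corresponding to $(c_r, \vs(p_r))$. Since $\vs(T_1) = T_2$, this target term lies in $T_2$, so $\phi$ maps $V_T^{(1)}$ bijectively onto $V_T^{(2)}$, and the qubit part is a bijection because $\vs \in S_n$. I would then verify adjacency, vertex-colour, and edge-colour preservation by the identical computations used in Part 1 of the previous proof, where the key relation $[\vs(p_r)]_{\vs(i)} = [p_r]_i$ does all the work, establishing $\phi$ as a graph isomorphism $\graph_1 \cong \graph_2$.

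For the backward direction ($\cong \implies \simp$), suppose $\phi: V_1 \to V_2$ is a graph isomorphism. Vertex-colour preservation forces $\phi(V_Q^{(1)}) = V_Q^{(2)}$ and $\phi(V_T^{(1)}) = V_T^{(2)}$, since the ``qubit'' and ``term'' colours are disjoint; this also forces $|V_T^{(1)}| = |V_T^{(2)}|$. I would read off a permutation $\vs \in S_n$ from the qubit action, $\phi(q_i) = q_{\vs(i)}$, and then show $\vs(T_1) = T_2$ by reproducing the case analysis of Part 4: for each term $(c_r, p_r) \in T_1$, vertex-colour preservation matches the coefficient on the image node, adjacency preservation together with edge-colour preservation pins each non-trivial image Pauli letter to $[p_r]_j$, and the non-adjacency argument handles $[p_r]_j = I$, jointly giving $[p_s]_{\vs(j)} = [p_r]_j$ for all $j$, i.e. the image term is exactly $(c_r, \vs(p_r))$. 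This yields $\vs(T_1) \subseteq T_2$, and since $\phi$ is a bijection between finite term sets of equal cardinality, the inclusion upgrades to $\vs(T_1) = T_2$, so $\ham_1 \simp \ham_2$.

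I expect the work to be bookkeeping rather than conceptual, as the argument is essentially a two-graph lift of the automorphism proposition. The one genuinely new point to handle cleanly is cardinality: a size mismatch between the term-colour classes would make no isomorphism possible, but this never needs a separate case, because each hypothesis already supplies a bijection ($\vs$ on $T_1$, respectively $\phi$ on $V_T^{(1)}$) that equates the relevant cardinalities before the structural argument begins. The remaining subtlety is simply ensuring the term-node bijection is consistent with the coefficient component of the vertex colouring, which colour preservation guarantees automatically.
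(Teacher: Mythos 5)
Your proposal is correct and takes essentially the same route as the paper's proof: the forward direction lifts $\vs$ to the vertex bijection $\phi$ and verifies adjacency, vertex-colour, and edge-colour preservation via the identity $[p_r]_i = [\vs(p_r)]_{\vs(i)}$, while the backward direction reads $\vs$ off the qubit nodes and runs the identical two-case analysis ($[p_r]_j \ne I$ versus $[p_r]_j = I$) to conclude $p_s = \vs(p_r)$ with matching coefficients. Your explicit cardinality step upgrading $\vs(T_1) \subseteq T_2$ to equality is a point the paper leaves implicit in the bijectivity of $\phi$ on term nodes, but it is the same argument, not a different one.
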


\begin{proof} We prove sufficiency ($\implies$) and necessity ($\impliedby$) separately.\vspace{1em}

\textbf{Part 1: $\ham_1 \sim_P\ham_2 \implies  \graph_1 \cong \graph_2$}
\begin{addmargin}[4em]{2em}
    Assume that $\ham_1, \ham_2$ are permutation-equivalent given a permutation $\vs$. We construct a graph isomorphism $\phi: V_1 \to V_2$ from $\vs$. Define the bijection $\phi$ on the vertex set $V = V_Q \cup V_T$ as:
    \begin{itemize}[\label{}]
        \item \textbf{$V_Q$:} $\phi$ applies the permutation $\vs$, i.e., for any $q_i \in V_{Q,1},~ \phi(q_i) = q_{\vs(i)} \in V_{Q,2}$.
        \item \textbf{$V_T$:} Since $\vs(T_1) = T_2$ by definition, it is known that for every term node $t_r \leftrightarrow (c_r, p_r) \in T_1$, there exists $t_s \leftrightarrow (c_r, \vs(p_r)) \in T_2$. Thus, $\phi(t_r) = t_s$, where $t_s \leftrightarrow (c_r, \vs(p_r))$.
    \end{itemize}
We verify the conditions for graph isomorphism:
    \paragraph{Adjacency preservation. } An edge $(q_i, t_r) \in E_1$ exists iff $[p_r]_i \ne I$. The image of this edge is $(\phi(q_i), \phi(t_r)) = (q_{\vs(i)}, t_s)$. An edge exists here if $[\vs(p_r)]_{\vs(i)} \ne I$. By definition of the permutation operation on Pauli strings, we have $[p_r]_i = [\vs(p_r)]_{\vs(i)}$. Thus, $(q_i, t_r) \in E_1 \iff (\phi(q_i), \phi(t_r)) \in E_2$.

    \paragraph{Vertex colour preservation. } For qubit nodes, there is no change in colours, i.e., for $q_i \in V_{Q, 1}$, $C_{V,1}(q_i) = C_{V,2}(q_{\vs(i)}) = \text{``qubit"}$. For term nodes $t_r$, $C_{V, 1}(t_r) = (\text{``term"}, c_r).$ Since $\vs$ only permutes the qubits within the Pauli string, the coefficient is preserved, i.e., $C_{V, 2}(t_s) = C_{V, 1}(t_r) = (\text{``term"}, c_r)$. Therefore, vertex colours are preserved.

    \paragraph{Edge colour preservation. } The colour of the original edge is $C_{E,1}((q_i, t_r)) = [p_r]_i$, and the colour of the permuted edge is $C_{E,2}((q_{\vs(i)}, t_s)) = [\vs(p_r)]_{\vs(i)}$. As established, the operators are identical under the permutation operation on the Pauli string, i.e., $[p_r]_i = [\vs(p_r)]_{\vs(i)}$. Therefore, edge colours are preserved.\\

    Since $\phi: V_1 \to V_2$ preserves the required properties, $\phi$ is an isomorphism, and thus, $\graph_1 \cong \graph_2$.
\end{addmargin}\vspace{1em}

\textbf{Part 2: $\ham_1 \sim_P\ham_2 \impliedby  \graph_1 \cong \graph_2$}
\begin{addmargin}[4em]{2em}
    Assume that $\graph_1, \graph_2$ are isomorphic given a mapping $\phi: V_1\to V_2$. We construct a permutation $\vs \in S_n$ from the automorphism $\phi$.
    Because the vertex colouring distinguishes qubit nodes $V_Q$ from term nodes $V_T$, the automorphism $\phi$ maps $V_{Q, 1} \to V_{Q,2}$ and $V_{T,1} \to V_{T,2}$. Define the permutation $\vs \in S_n$ from the action of $\phi$ on the qubit vertices:

    \begin{equation}
        \phi(q_i) = q_{\vs(i)} ~\forall~i \in [1, n].
    \end{equation}
    We show that the action of $\vs$ on $T_1$ yields $T_2$, i.e., $\vs(T_1) = T_2$. Consider a term node $t_r \in V_{T,1}$ corresponding to $(c_r, p_r) \in T_1$, and its image node $\phi(t_r) = t_s \in V_{T,2}$ corresponding to $(c_s, p_s) \in T_2$.

    \paragraph{Coefficient preservation. } By vertex colour preservation, $C_{V,1}(t_r) = C_{V,2}(t_s)$. Since the colour of a term node includes it coefficient, this implies that $c_r = c_s$.

    \paragraph{Pauli string equivalence. } To prove that $p_s=\vs(p_r)$, we show that the two strings have identical operators on the qubits, i.e., $[p_s]_{\vs(j)}=[p_r]_j ~ \forall ~ j \in [1,n]$ qubit indices.

    \begin{itemize}[\label{}]
        \item \textbf{Case $[p_r]_j \ne I$}: An edge $(q_j, t_r)$ exists. By adjacency preservation, $(\phi(q_j), \phi(t_r)) = (q_{\vs(j)},t_s)$ must also exist. By edge colour preservation, $C_{E,1}((q_j, t_r)) =C_{E,2}((q_{\vs(j)},t_s))$, which is true only if $[p_r]_j = [p_s]_{\vs(j)}$.
        \item \textbf{Case $[p_r]_j = I$}: No edge $(q_j, t_r)$ exists. By adjacency preservation, $(\phi(q_j), \phi(t_r)) = (q_{\vs(j)},t_s)$ cannot exist, implying $[p_r]_j = [p_s]_{\vs(j)} = I$.
    \end{itemize}
    In both cases, we show $[p_r]_j = [p_s]_{\vs(j)}$, i.e., $p_s = \vs(p_r)$. Since the isomorphism $\phi$ maps every term $(c_r, p_r) \in T_1$ to a term $\phi(t_r) \leftrightarrow (c_r, \vs(p_r)) \in T_2$, the set of terms $T_2$ is equivalent to the set of terms $T_1$ under the action of permutation $\vs$. Thus, $\vs(T_1)= T_2$, and $\ham_1 \simp H_2$. 
\end{addmargin}\vspace{1em}

From the two parts above, we prove that the graph $\graph$ is a faithful representation of the permutation equivalence of the Hamiltonian $\ham$.
\end{proof}

\subsection{Degree bound}

Hamiltonians representing physical systems often have a constant locality $k$ and a bounded interaction degree $d$. From these properties, it can be shown that their graphs have a bounded degree that does not scale with system size.

\begin{prop}
    The degree of the constructed graph $\graph$ for a Hamiltonian $\ham$ with constant locality $k$ and bounded interaction degree $d$ is bounded, i.e., \begin{equation}
        \deg(\graph) \in O(1).
    \end{equation}
\end{prop}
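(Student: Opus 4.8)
The plan is to bound the maximum vertex degree of $\graph$ separately over its two colour classes, the qubit nodes $V_Q$ and the term nodes $V_T$. Since $\graph$ is bipartite with every edge running between the two classes, we have $\deg(\graph) = \max\left(\max_{r}\deg(t_r),\ \max_{i}\deg(q_i)\right)$, so it suffices to show that each of these two maxima is a constant independent of the system size $n$.

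First I would bound the term-node degrees using the locality $k$. By construction, the degree of a term node $t_r$ equals the number of qubits on which its Pauli string acts non-trivially, i.e. $\deg(t_r) = |\supp(p_r)|$. Each Pauli string arises in the Pauli-basis expansion of some $k$-local term $H_i$, whose support has size at most $k$; since $\supp(p_r) \subseteq \supp(H_i)$ and combining like terms does not enlarge a string's support, we obtain $\deg(t_r) \le k$ for every $t_r \in V_T$. This step is routine.

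The main work, and the principal obstacle, is bounding the qubit-node degrees $\deg(q_i)$, which count the number of Hamiltonian terms acting non-trivially on site $i$. Here I would translate the interaction-degree hypothesis, stated in terms of the interaction graph $G$, into a bound on the number of admissible Pauli strings. The key observation is that if a term's support contains $i$ together with other sites, then $G$ contains an edge from $i$ to each of those sites, so every site co-occurring with $i$ lies in the neighbourhood $N(i)$ of $i$ in $G$. Hence the support of any term touching $i$ is contained in $\{i\} \cup N(i)$, a fixed set of size at most $d+1$. The number of distinct Pauli strings supported on this set that act non-trivially on $i$ is at most $3\cdot 4^{d}$ (three non-identity choices at $i$ and four choices at each of the at most $d$ neighbours), and since we assume at most one term per Pauli string, $\deg(q_i) \le 3\cdot 4^{d}$.

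Combining the two bounds yields $\deg(\graph) \le \max\left(k,\ 3\cdot 4^{d}\right)$, a constant whenever $k$ and $d$ are constants, and therefore $\deg(\graph) \in O(1)$. The delicate point is the counting argument in the qubit-node case: one must ensure the combinatorial bound depends only on the closed-neighbourhood size $d+1$, and not on $n$ or on the total number of terms $m$, both of which may grow with the system.
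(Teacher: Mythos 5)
Your proof is correct and takes essentially the same route as the paper's: it splits the bipartition, bounds $\deg(t_r) \le k$ by locality, and bounds $\deg(q_i)$ by noting that any term touching qubit $i$ has support inside $\{i\} \cup N(i)$, a set of size at most $d+1$, then counts admissible Pauli strings and invokes the one-term-per-string assumption. The only divergence is the counting constant --- the paper uses $\binom{d}{k-1} \cdot 3^k$ over supports of size exactly $k$, whereas your $3 \cdot 4^{d}$ counts all strings on the closed neighbourhood, which is looser when $k \ll d$ but cleanly covers terms of support size below $k$ as well --- and both bounds are constants, giving $\deg(\graph) \in O(1)$.
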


\begin{proof}
The graph $\graph$ has vertex set $V = V_Q \cup V_T$, where $V_Q$ are qubit vertices and $V_T$ are term vertices corresponding to Pauli terms of the Hamiltonian.
For each term vertex $v_T \in V_T$, 
\begin{equation}\label{eq:lim_degvt}
\deg(v_T) \leq k,
\end{equation}
since each term acts non-trivially on at most $k$ qubits by the locality of the Hamiltonian.
Consider a qubit vertex $v_Q \in V_Q$. The degree $\deg(v_Q)$ is the number of terms in $\ham$ that act non-trivially on qubit $v_Q$.
Since the Hamiltonian has an interaction range bounded by $d$, any Pauli term acting non-trivially on qubit $v_Q$ must have its support within a neighbourhood of size $d$ around $v_Q$ in the interaction graph of the Hamiltonian.

Within this neighbourhood of size at most $d$, we select subsets of $k-1$ qubits (excluding itself) to form a support of size $k$ including $v_Q$. The number of such subsets is bounded by the binomial coefficient $$\binom{d}{k-1}.$$
For each such choice of support, there are $3^k$ possible non-trivial Pauli strings, since each qubit operator can be $X$, $Y$, or $Z$.
Therefore, the total number of terms acting non-trivially on $v_Q$ is bounded by \begin{equation}\label{eq:lim_degvq}
\deg(v_Q) \leq \binom{d}{k-1} \times 3^k.
\end{equation}
Therefore, the degree of the graph is
\begin{align*}
\deg(\graph) &= \max_{v \in V}\left\{\max_{v_Q \in V_Q}\deg(v_Q), \max_{v_T \in V_T}\deg(v_T)\right\}\\
\deg(v_T) &\in O(k) & \text{[From Eq.~\eqref{eq:lim_degvt}]}\\
\deg(v_Q) &\in O(3^k) & \text{[From Eq.~\eqref{eq:lim_degvq}]}\\
\implies \max_{v_T \in V_T}\deg(v_T) &\le \max_{v_Q \in V_Q}\deg(v_Q)\\
\implies \deg(\graph) &\in O(3^k)\\
\therefore \deg(\graph)&\in O(1) & \text{[Since $k, d \in O(1)$]}.
\end{align*}
\end{proof}
This result enables the use of Luks' \cite{Luks1982} and Neuen's \cite{neuenPowerAlgorithmicApproaches2019} methods for polynomial-time graph automorphism finding, and thus, permutation symmetry group finding on Hamiltonians representing physical systems.

\section{Validation}

To validate the claim that the Hamiltonian's permutation symmetry group $\gham$ is isomorphic to the automorphism group of its corresponding coloured bipartite graph $\autg$, we performed a series of empirical tests.
We implemented the graph construction algorithm as detailed in Sec. \ref{ssec:constr} in Python. This algorithm converts a Hamiltonian $\ham$, represented as a dictionary of (Pauli string, coefficient) pairs, into the coloured bipartite graph $\graph$. The generators of the automorphism group $\autg$ were then computed using standard computational graph theory libraries (NetworkX \cite{networkx_paper} and Nauty \cite{nautypaper}).
We apply this implementation to a set of physical models for which the permutation symmetries $\gham$ are analytically known. Specifically, we consider varying cases of the Ising and Heisenberg models. We discard the trivial permutation symmetry for every experiment. A successful validation requires that the computed $\autg$ is isomorphic to the known $\gham$ for all cases.

\subsection{Transverse Field Ising Model}

In this section, we present and discuss the results of applying our algorithm to various cases of the transverse field Ising model.

\begin{table}[H]
  \centering 
  \caption{Known and Found Symmetry Groups for 1D Homogeneous TFIM (Periodic Boundary Conditions) with Varying System Size}
  \label{tab:h_tfim_pbc_results}
  \begin{tabular}{@{} c c m{7cm} @{}}
    \toprule
    Number of Qubits (n) & Known Symmetry Group $\gham$ & Generators of $\autg$ Found\\
    \midrule
    5 & $D_5$ & (2\quad5)(3\quad4),\newline (1\quad2)(3\quad5)\\
    \midrule
    8 & $D_8$ & (2\quad8)(3\quad7)(4\quad6),\newline(1\quad2)(3\quad8)(4\quad7)(5\quad6)\\
    \midrule
    10 & $D_{10}$ &
    (2\quad10)(3\quad9)(4\quad8)(5\quad7),\newline(1\quad 2)(3\quad10)(4 \quad9)(5\quad8)(6\quad7)
    \\
    \bottomrule
  \end{tabular}
\end{table}

\paragraph{1D Homogeneous TFIM with Periodic Boundary Conditions. } We observe that the first generator in all cases in table \ref{tab:h_tfim_pbc_results} corresponds to the reflection $P_1$. In the first case, the second generator corresponds to $P_4$. The composition of $P_4$ and $P_1$ yields the rotation operator $R_1$. This can be shown as follows:
\begin{align*}
    P_4(P_1(k)) &= P_4(2- k\pmod5) \\
    &= (3-(2-k))\mod 5\\
    &= (1+k)\mod 5\\
    &=R_1(k).
\end{align*}
Since both $P_1$ and $R_1$ can be generated from the generators found in the first case, the algorithm correctly identifies the $D_5$ symmetry in the first case. For the second and third cases with generators $\{P_1, P_{5.5}\}$ and $\{P_1, P_{6.5}\}$, the operator $R_1$ may be generated by $P_{5.5} \circ P_1$ and $P_{6.5} \circ P_1$. Since the rotation step $s = 1$ is co-prime to 8 and 10, the $D_8$ and $D_{10}$ symmetry groups are generated by the found generators.
Therefore, the algorithm correctly identifies the known $\gham \cong D_n$ symmetry group in all cases of the 1D homogeneous TFIM with periodic boundary conditions.

\begin{table}[H]
  \centering 
  \caption{Known and Found Symmetry Groups for 1D Homogeneous TFIM (Open Boundary Conditions) with Varying System Size}
  \label{tab:h_tfim_obc_results}
  \begin{tabular}{@{} c c m{7cm} @{}}
    \toprule
    Number of Qubits (n) & Known Symmetry Group $\gham$ & Generators of $\autg$ Found\\
    \midrule
    5 & $C_2$ &(1\quad5)(2\quad4)\\
    \midrule
    8 & $C_2$ & (1\quad8)(2\quad7)(3\quad6)(4\quad5)\\
    \midrule
    10 & $C_{2}$ &
    (1\quad10)(2\quad9)(3\quad8)(4\quad7)(5\quad6)\\
    \bottomrule
  \end{tabular}
\end{table}

\paragraph{1D Homogeneous TFIM with Open Boundary Conditions. }

We observe that the found generators $P_{3}, P_{4.5}, P_{5.5}$ in table \ref{tab:h_tfim_obc_results} correspond exactly to the required $P_{(n+1)/2}$ generator in all cases, thereby generating the correct $C_2$ group. Therefore, the algorithm correctly identifies the known $\gham = C_2$ symmetry group in all cases of the 1D homogeneous TFIM with open boundary conditions.

\begin{table}[H]
  \centering 
  \caption{Known and Found Symmetry Groups for 1D Inhomogeneous TFIM with Varying System Size}
  \label{tab:ih_tfim_results}
  \begin{tabular}{@{} c c m{7cm} @{}}
    \toprule
    Number of Qubits (n) & Known Symmetry Group $\gham$ & Generators of $\autg$ Found\\
    \midrule
    5 & Nil & Nil\\
    \midrule
    8 & Nil & Nil\\
    \midrule
    10 & Nil & Nil\\
    \bottomrule
  \end{tabular}
\end{table}

\paragraph{1D Inhomogeneous TFIM}
We observe that no generators are found in table \ref{tab:ih_tfim_results} as expected.

\begin{table}[H]
  \centering 
  \caption{Known and Found Symmetry Groups for 2D TFIM (Square Lattice)}
  \label{tab:sq_tfim_results}
  \begin{tabular}{@{} c c m{7cm} @{}}
    \toprule
    Number of Qubits (n) & Known Symmetry Group $\gham$ & Generators of $\autg$ Found\\
    \midrule
    4 & $D_4$ & (2\quad4),\newline(1\quad2)(3\quad4)\\
    \bottomrule
  \end{tabular}
\end{table}

\paragraph{2D Homogeneous TFIM} We observe that the second generator in table \ref{tab:sq_tfim_results} corresponds to the horizontal $P_{1.5}$ rotation operator. We can show that composing this operator with the first generator, i.e., $P_{1.5} ~\circ ~(2\quad 4)$ corresponds exactly to the vertical reflection operator $P_{2.5}$. Therefore, the algorithm correctly identifies the $\gham \cong D_4$ symmetry group in the case of the 2D square lattice homogeneous TFIM.

\subsection{Heisenberg Model}

\begin{table}[H]
  \centering 
  \caption{Known and Found Symmetry Groups for 1D Mean-Field Heisenberg XXX Model}
  \label{tab:heis_xxx_results}
  \begin{tabular}{@{} c c m{7cm} @{}}
    \toprule
    Number of Qubits (n) & Known Symmetry Group $\gham$ & Generators of $\autg$ Found\\
    \midrule
    5 & $S_5$ &(1\quad2), (2\quad3),\newline(3\quad4), (4\quad5)\\
    \midrule
    8 & $S_8$&(1\quad2), (2\quad3),\newline(3\quad4), (4\quad5),\newline(5\quad6), (6\quad7),\newline(7\quad8)\\
    \midrule
    10 & $S_{10}$&(1\quad2), (2\quad3), (3\quad4),\newline(4\quad5), (5\quad6), (6\quad7),\newline(7\quad8), (8\quad9), (9\quad10)\\
    \bottomrule
  \end{tabular}
\end{table}

\paragraph{1D Mean-Field Heisenberg XXX Model. } We observe that the generators in each case correspond exactly to the set of expected generators $\{T_i\}$. Therefore, the algorithm correctly identifies the $\gham = S_n$ symmetry group in all cases of the 1D Mean-Field Heisenberg XXX Model.

\section{Conclusion}
\label{sec:conclusion}

This work addresses a notable lack of a general, deterministic, and structurally exact algorithm to identify the full permutation symmetry group ($\gham$) of an arbitrary Hamiltonian $\ham$ expressed as a weighted sum of Pauli strings. We have introduced a novel method that establishes an isomorphism between the Hamiltonian's permutation symmetry group and the automorphism group of a corresponding coloured bipartite graph $\graph$ as constructed in Sec. \ref{ssec:constr}. The isomorphism $\gham \cong \autg$ was proven in Sec. \ref{ssec:proof} by demonstrating that the mapping $\Psi: \gham \to \autg$ is a well-defined group isomorphism. Furthermore, the proof of faithful representation establishes that the problem of deciding whether two Hamiltonians are permutation-equivalent is polynomial-time reducible to the Graph Isomorphism (GI) problem.

This result has significant computational implications for the simulation of Hamiltonians of physical systems. We proved that for physical Hamiltonians, which are characterised by bounded locality $k$ and a bounded interaction degree $d$, the constructed graph $\graph$ necessarily has a bounded vertex degree ($\deg(\graph) \in O(1)$). This result reduces the complexity of finding $\autg$ from the quasi-polynomial time required for general graphs to polynomial time using established algorithms for bounded-degree graphs. Our method thus provides a computationally efficient and scalable tool for a problem previously lacking a general solution. The validity of this graph-theoretic isomorphism was empirically confirmed by applying the algorithm to several physical models with analytically known symmetry groups, such as various 1D and 2D Transverse Field Ising Models and the mean-field Heisenberg XXX model, and verifying in all cases that the generators of the computed $\autg$ correctly reproduced the known $\gham$.

The algorithm's performance must be validated through large-scale benchmarking against existing numerical and constraint-based methods. Theoretical extensions should be explored, such as adapting the graph construction to directly identify symmetries in fermionic Hamiltonians prior to transformations (like the Jordan-Wigner transform) that may obscure them. Finally, investigating how small, symmetry-breaking perturbations in $\ham$ are reflected in the structure of $\graph$ could provide a pathway to identifying approximate symmetries by relaxing the conditions of the exact graph automorphism method presented in this paper.

\subsection*{Acknowledgements}

We express gratitude to Long Tianqi for his insights on proofs in graph theory and group theory. We also thank Clyde Lhui Kay Kin and Fan Yueh-Ching for their helpful discussions.

\newpage
\printbibliography

@book{corry2017symmetry,
  author    = {Scott Corry},
  title     = {Symmetry and Quantum Mechanics},
  year      = {2017},
  publisher = {CRC Press},
  isbn      = {9781498701167},
  note      = {1st Edition}
}

@book{cantwell2016introduction,
  author    = {Brian Cantwell},
  title     = {Introduction to Symmetry Analysis},
  year      = {2016},
  publisher = {Cambridge University Press}
}

@book{Sakurai_modern_qm2010,
  title     = "Modern Quantum Mechanics",
  author    = "Sakurai, J J and Napolitano, Jim J",
  publisher = "Pearson Education",
  edition   =  2,
  month     =  jul,
  year      =  2010,
  address   = "Philadelphia, PA"
}

@misc{strecka2015briefaccountisingisinglike,
      title={A brief account of the Ising and Ising-like models: Mean-field, effective-field and exact results}, 
      author={Jozef Strecka and Michal Jascur},
      year={2015},
      eprint={1511.03031},
      archivePrefix={arXiv},
      primaryClass={cond-mat.stat-mech},
      url={https://arxiv.org/abs/1511.03031}, 
}

@misc{chaudhari2018analyticalsolutionsinhomogeneoustransverse,
      title={Analytical solutions of inhomogeneous transverse field Ising models}, 
      author={Abhijit P. Chaudhari and Rajeev Singh and Sunil K. Mishra},
      year={2018},
      eprint={1812.07316},
      archivePrefix={arXiv},
      primaryClass={quant-ph},
      url={https://arxiv.org/abs/1812.07316}, 
}

@article{Lenz1920,
  author    = {Wilhelm Lenz},
  title     = {Beiträge zum Verständnis der magnetischen Eigenschaften in festen Körpern},
  journal   = {Physikalische Zeitschrift},
  volume    = {21},
  pages     = {613--615},
  year      = {1920}
}

@article{Ising1925,
  author    = {Ernst Ising},
  title     = {Beitrag zur Theorie des Ferromagnetismus},
  journal   = {Zeitschrift für Physik},
  volume    = {31},
  number    = {1},
  pages     = {253--258},
  year      = {1925},
  doi       = {10.1007/BF02980577}
}

@article{Joel2012,
   title={An introduction to the spectrum, symmetries, and dynamics of spin-1/2 Heisenberg chains},
   volume={81},
   ISSN={1943-2909},
   url={http://dx.doi.org/10.1119/1.4798343},
   DOI={10.1119/1.4798343},
   number={6},
   journal={American Journal of Physics},
   publisher={American Association of Physics Teachers (AAPT)},
   author={Joel, Kira and Kollmar, Davida and Santos, Lea F.},
   year={2013},
   month=may, pages={450–457} }

@article{nautypaper,
title = {Practical graph isomorphism, II},
journal = {Journal of Symbolic Computation},
volume = {60},
pages = {94-112},
year = {2014},
issn = {0747-7171},
doi = {https://doi.org/10.1016/j.jsc.2013.09.003},
url = {https://www.sciencedirect.com/science/article/pii/S0747717113001193},
author = {Brendan D. McKay and Adolfo Piperno},
keywords = {Graph isomorphism, Canonical labelling, Nauty, Traces, Partition refinement},
abstract = {We report the current state of the graph isomorphism problem from the practical point of view. After describing the general principles of the refinement-individualization paradigm and pro ving its validity, we explain how it is implemented in several of the key implementations. In particular, we bring the description of the best known program nauty up to date and describe an innovative approach called Traces that outperforms the competitors for many difficult graph classes. Detailed comparisons against saucy, Bliss and conauto are presented.}
}

@thesis{neuenPowerAlgorithmicApproaches2019,
  title = {The Power of Algorithmic Approaches to the Graph Isomorphism Problem},
  author = {Neuen, Daniel},
  namea = {Grohe, Martin and Schweitzer, Pascal and Babai, László},
  nameatype = {collaborator},
  date = {2019},
  journaltitle = {Dissertation},
  volume = {RWTH Aachen University},
  pages = {pages 1 Online-Ressource (vi, 148 Seiten) : Illustrationen},
  institution = {RWTH Aachen University},
  doi = {10.18154/RWTH-2020-00160},
  url = {http://publications.rwth-aachen.de/record/775652}
}

@online{skresanovReductionGroupIsomorphism2025,
  title = {Reduction of the Group Isomorphism Problem to the Group Automorphism Problem},
  author = {Skresanov, Saveliy V.},
  date = {2025-03-03},
  eprint = {2503.01180},
  eprinttype = {arXiv},
  eprintclass = {cs},
  doi = {10.48550/arXiv.2503.01180},
  url = {http://arxiv.org/abs/2503.01180},
  pubstate = {prepublished},
  version = {1}
}

@InProceedings{bliss_paper1,
author={Junttila, Tommi
and Kaski, Petteri},
editor={Marchetti-Spaccamela, Alberto
and Segal, Michael},
title={Conflict Propagation and Component Recursion for Canonical Labeling},
booktitle={Theory and Practice of Algorithms in (Computer) Systems},
year={2011},
publisher={Springer Berlin Heidelberg},
pages={151--162},
isbn={978-3-642-19754-3}
}

@incollection{cameronAutomorphismsGraphs2004,
  title = {Automorphisms of Graphs},
  booktitle = {Topics in {{Algebraic Graph Theory}}},
  author = {Cameron, Peter J.},
  editor = {Beineke, Lowell W. and Wilson, Robin J. and Cameron, Peter J.},
  date = {2004},
  pages = {137--155},
  publisher = {Cambridge University Press},
  location = {Cambridge},
  isbn = {9780521801973}
}

@inproceedings{babaiGROUPGRAPHSALGORITHMS2019,
  title = {{{Group}}, {{Graphs}}, {{Algorithms}}: {{The Graph Isomorphism problem}}},
  shorttitle = {{{Group}}, {{Graphs}}, {{Algorithms}}},
  booktitle = {Proceedings of the {{International Congress}} of {{Mathematicians}} ({{ICM}} 2018)},
  author = {Babai, László},
  date = {2019-05},
  pages = {3319--3336},
  publisher = {WORLD SCIENTIFIC},
  location = {Rio de Janeiro, Brazil},
  doi = {10.1142/9789813272880_0183},
  url = {https://www.worldscientific.com/doi/abs/10.1142/9789813272880_0183},
  eventtitle = {International {{Congress}} of {{Mathematicians}} 2018},
  isbn = {978-981-327-287-3}
}

@article{groheFasterIsomorphismTest2023,
  title = {A {{Faster Isomorphism Test}} for {{Graphs}} of {{Small Degree}}},
  author = {Grohe, Martin and Neuen, Daniel and Schweitzer, Pascal},
  date = {2023-12},
  journaltitle = {SIAM Journal on Computing},
  shortjournal = {SIAM J. Comput.},
  volume = {52},
  number = {6},
  eprint = {1802.04659},
  eprinttype = {arXiv},
  eprintclass = {cs},
  pages = {FOCS18-1-FOCS18-36},
  issn = {0097-5397, 1095-7111},
  doi = {10.1137/19m1245293},
  url = {http://arxiv.org/abs/1802.04659}
}

@article{zemlyachenkoGraphIsomorphismProblem1985,
  title = {Graph Isomorphism Problem},
  author = {Zemlyachenko, V. N. and Korneenko, N. M. and Tyshkevich, R. I.},
  date = {1985-05},
  journaltitle = {Journal of Soviet Mathematics},
  shortjournal = {J Math Sci},
  volume = {29},
  number = {4},
  pages = {1426--1481},
  issn = {0090-4104, 1573-8795},
  doi = {10.1007/BF02104746},
  url = {http://link.springer.com/10.1007/BF02104746}
}

@inproceedings{networkx_paper,
author = {Hagberg, Aric and Swart, Pieter and Chult, Daniel},
year = {2008},
month = {06},
pages = {},
title = {Exploring Network Structure, Dynamics, and Function Using NetworkX},
journal = {Proceedings of the 7th Python in Science Conference},
doi = {10.25080/TCWV9851}
}

@article{Lawrence_2002,
   title={Mutually unbiased binary observable sets on N qubits},
   volume={65},
   ISSN={1094-1622},
   url={http://dx.doi.org/10.1103/PhysRevA.65.032320},
   DOI={10.1103/physreva.65.032320},
   number={3},
   journal={Physical Review A},
   publisher={American Physical Society (APS)},
   author={Lawrence, Jay and Brukner, Časlav and Zeilinger, Anton},
   year={2002},
   month=feb 
}

@article{Heisenberg1928,
  author = {Heisenberg, W.},
  title = {Zur {T}heorie des {F}erromagnetismus},
  journal = {Zeitschrift f\"ur Physik},
  year = {1928},
  volume = {49},
  pages = {619},
  doi = {10.1007/BF01328601}
}

@misc{alon2018meanfieldquantumheisenbergferromagnet,
      title={The mean-field quantum Heisenberg ferromagnet via representation theory}, 
      author={Gil Alon and Gady Kozma},
      year={2018},
      eprint={1811.10530},
      archivePrefix={arXiv},
      primaryClass={math.PR},
      url={https://arxiv.org/abs/1811.10530}, 
}

@article{anschuetzEfficientClassicalAlgorithms2023,
  title = {Efficient Classical Algorithms for Simulating Symmetric Quantum Systems},
  author = {Anschuetz, Eric R. and Bauer, Andreas and Kiani, Bobak T. and Lloyd, Seth},
  date = {2023-11-28},
  journaltitle = {Quantum},
  shortjournal = {Quantum},
  volume = {7},
  eprint = {2211.16998},
  eprinttype = {arXiv},
  eprintclass = {quant-ph},
  pages = {1189},
  issn = {2521-327X},
  doi = {10.22331/q-2023-11-28-1189},
  url = {http://arxiv.org/abs/2211.16998}
}

@article{ayeniEfficientParticleconservingSymmetric2025,
  title = {Efficient Particle-Conserving Symmetric Quantum Circuits},
  author = {Ayeni, Babatunde M.},
  date = {2025-02-25},
  journaltitle = {Physical Review A},
  shortjournal = {Phys. Rev. A},
  volume = {111},
  number = {2},
  pages = {022629},
  publisher = {American Physical Society},
  doi = {10.1103/PhysRevA.111.022629},
  url = {https://link.aps.org/doi/10.1103/PhysRevA.111.022629}
}

@online{babaiGraphIsomorphismQuasipolynomial2016,
  title = {Graph {{Isomorphism}} in {{Quasipolynomial Time}}},
  author = {Babai, László},
  date = {2016-01-19},
  eprint = {1512.03547},
  eprinttype = {arXiv},
  eprintclass = {cs},
  doi = {10.48550/arXiv.1512.03547},
  url = {http://arxiv.org/abs/1512.03547},
  pubstate = {prepublished}
}

@online{bertelsSymmetryBreakingSlows2022,
  title = {Symmetry Breaking Slows Convergence of the {{ADAPT Variational Quantum Eigensolver}}},
  author = {Bertels, Luke W. and Grimsley, Harper R. and Economou, Sophia E. and Barnes, Edwin and Mayhall, Nicholas J.},
  date = {2022-07-07},
  eprint = {2207.03063},
  eprinttype = {arXiv},
  eprintclass = {quant-ph},
  doi = {10.48550/arXiv.2207.03063},
  url = {http://arxiv.org/abs/2207.03063},
  pubstate = {prepublished}
}

@article{Dierkes_2023,
   title={Hamiltonian neural networks with automatic symmetry detection},
   volume={33},
   ISSN={1089-7682},
   url={http://dx.doi.org/10.1063/5.0142969},
   DOI={10.1063/5.0142969},
   number={6},
   journal={Chaos: An Interdisciplinary Journal of Nonlinear Science},
   publisher={AIP Publishing},
   author={Dierkes, Eva and Offen, Christian and Ober-Blöbaum, Sina and Flaßkamp, Kathrin},
   year={2023},
   month=jun
}

@misc{hou2024machinelearningsymmetrydiscovery,
      title={Machine Learning Symmetry Discovery for Classical Mechanics}, 
      author={Wanda Hou and Molan Li and Yi-Zhuang You},
      year={2024},
      eprint={2412.14632},
      archivePrefix={arXiv},
      primaryClass={cond-mat.dis-nn},
      url={https://arxiv.org/abs/2412.14632}, 
}

@article{Luks1982,
  author    = {Eugene M. Luks},
  title     = {Isomorphism of Graphs of Bounded Valence Can Be Tested in Polynomial Time},
  journal   = {Journal of Computer and System Sciences},
  volume    = {25},
  number    = {1},
  pages     = {42--65},
  year      = {1982},
  doi       = {10.1016/0022-0000(82)90009-5}
}

@online{bravyi2017taperingqubitssimulatefermionic,
  title = {Tapering off Qubits to Simulate Fermionic {{Hamiltonians}}},
  author = {Bravyi, Sergey and Gambetta, Jay M. and Mezzacapo, Antonio and Temme, Kristan},
  date = {2017-01-27},
  eprint = {1701.08213},
  eprinttype = {arXiv},
  eprintclass = {quant-ph},
  doi = {10.48550/arXiv.1701.08213},
  url = {http://arxiv.org/abs/1701.08213},
  pubstate = {prepublished}
}

@article{bringewattEffectiveGapsAre2020,
  title = {Effective {{Gaps Are Not Effective}}: {{Quasipolynomial Classical Simulation}} of {{Obstructed Stoquastic Hamiltonians}}},
  shorttitle = {Effective {{Gaps Are Not Effective}}},
  author = {Bringewatt, Jacob and Jarret, Michael},
  date = {2020-10-23},
  journaltitle = {Physical Review Letters},
  shortjournal = {Phys. Rev. Lett.},
  volume = {125},
  number = {17},
  pages = {170504},
  publisher = {American Physical Society},
  doi = {10.1103/PhysRevLett.125.170504},
  url = {https://link.aps.org/doi/10.1103/PhysRevLett.125.170504}
}

@article{doddsPracticalDesignsPermutationsymmetric2019,
  title = {Practical designs for permutation-symmetric problem Hamiltonians on hypercubes},
  author = {Dodds, A. Ben and Kendon, Viv and Adams, Charles S. and Chancellor, Nicholas},
  journal = {Phys. Rev. A},
  volume = {100},
  issue = {3},
  pages = {032320},
  numpages = {8},
  year = {2019},
  month = {9},
  publisher = {American Physical Society},
  doi = {10.1103/PhysRevA.100.032320},
  url = {https://link.aps.org/doi/10.1103/PhysRevA.100.032320}
}

@book{finkelsteinGroupsComputation1993,
  title = {Groups and {{Computation}}},
  editor = {Finkelstein, Larry and Kantor, William},
  date = {1993-09-01},
  series = {{{DIMACS Series}} in {{Discrete Mathematics}} and                         {{Theoretical Computer Science}}},
  volume = {11},
  publisher = {American Mathematical                     Society},
  location = {Providence, Rhode                     Island},
  doi = {10.1090/dimacs/011},
  url = {http://www.ams.org/dimacs/011},
  isbn = {9780821865996}
}

@article{lloyd1996,
author = {Seth Lloyd },
title = {Universal Quantum Simulators},
journal = {Science},
volume = {273},
number = {5278},
pages = {1073-1078},
year = {1996},
doi = {10.1126/science.273.5278.1073},
URL = {https://www.science.org/doi/abs/10.1126/science.273.5278.1073},
eprint = {https://www.science.org/doi/pdf/10.1126/science.273.5278.1073}
}

@misc{aharonov2003adiabaticquantumstategeneration,
      title={Adiabatic Quantum State Generation and Statistical Zero Knowledge}, 
      author={Dorit Aharonov and Amnon Ta-Shma},
      year={2003},
      eprint={quant-ph/0301023},
      archivePrefix={arXiv},
      primaryClass={quant-ph},
      url={https://arxiv.org/abs/quant-ph/0301023}, 
}

@article{gardEfficientSymmetrypreservingState2020,
  title = {Efficient Symmetry-Preserving State Preparation Circuits for the Variational Quantum Eigensolver Algorithm},
  author = {Gard, Bryan T. and Zhu, Linghua and Barron, George S. and Mayhall, Nicholas J. and Economou, Sophia E. and Barnes, Edwin},
  date = {2020-01-28},
  journaltitle = {npj Quantum Information},
  shortjournal = {npj Quantum Inf},
  volume = {6},
  number = {1},
  pages = {10},
  publisher = {Nature Publishing Group},
  issn = {2056-6387},
  doi = {10.1038/s41534-019-0240-1},
  url = {https://www.nature.com/articles/s41534-019-0240-1}
}

@article{herasymenkoDiagrammaticApproachVariational2021,
  title = {A Diagrammatic Approach to Variational Quantum Ansatz Construction},
  author = {Herasymenko, Y. and O'Brien, T. E.},
  date = {2021-12-02},
  journaltitle = {Quantum},
  shortjournal = {Quantum},
  volume = {5},
  eprint = {1907.08157},
  eprinttype = {arXiv},
  eprintclass = {quant-ph},
  pages = {596},
  issn = {2521-327X},
  doi = {10.22331/q-2021-12-02-596},
  url = {http://arxiv.org/abs/1907.08157}
}

@inproceedings{katebiGraphSymmetryDetection,
  title = {Graph {{Symmetry Detection}} and {{Canonical Labeling}}: {{Differences}} and {{Synergies}}},
  shorttitle = {Graph {{Symmetry Detection}} and {{Canonical Labeling}}},
  author = {Katebi, Hadi and Sakallah, Karem A. and Markov, Igor L.},
  pages = {181--165},
  doi = {10.29007/gzc1},
  url = {https://easychair.org/publications/paper/2vbW},
  eventtitle = {Turing-100. {{The Alan Turing Centenary}}}
}

@online{leeuwenRevealingSymmetriesQuantum2024,
  title = {Revealing Symmetries in Quantum Computing for Many-Body Systems},
  author = {family=Leeuwen, given=Robert, prefix=van, useprefix=false},
  date = {2024-07-03},
  eprint = {2407.03452},
  eprinttype = {arXiv},
  eprintclass = {quant-ph},
  doi = {10.48550/arXiv.2407.03452},
  url = {http://arxiv.org/abs/2407.03452},
  pubstate = {prepublished}
}

@article{meyerExploitingSymmetryVariational2023,
  title = {Exploiting Symmetry in Variational Quantum Machine Learning},
  author = {Meyer, Johannes Jakob and Mularski, Marian and Gil-Fuster, Elies and Mele, Antonio Anna and Arzani, Francesco and Wilms, Alissa and Eisert, Jens},
  date = {2023-03-17},
  journaltitle = {PRX Quantum},
  shortjournal = {PRX Quantum},
  volume = {4},
  number = {1},
  eprint = {2205.06217},
  eprinttype = {arXiv},
  eprintclass = {quant-ph},
  pages = {010328},
  issn = {2691-3399},
  doi = {10.1103/PRXQuantum.4.010328},
  url = {http://arxiv.org/abs/2205.06217}
}

@article{moudgalyaNumericalMethodsDetecting2023,
  title = {Numerical {{Methods}} for {{Detecting Symmetries}} and {{Commutant Algebras}}},
  author = {Moudgalya, Sanjay and Motrunich, Olexei I.},
  date = {2023-06-27},
  journaltitle = {Physical Review B},
  shortjournal = {Phys. Rev. B},
  volume = {107},
  number = {22},
  eprint = {2302.03028},
  eprinttype = {arXiv},
  eprintclass = {cond-mat},
  pages = {224312},
  issn = {2469-9950, 2469-9969},
  doi = {10.1103/PhysRevB.107.224312},
  url = {http://arxiv.org/abs/2302.03028}
}

@online{ouderaaNoethersRazorLearning2024,
  title = {Noether's Razor: {{Learning Conserved Quantities}}},
  shorttitle = {Noether's Razor},
  author = {family=Ouderaa, given=Tycho F. A., prefix=van der, useprefix=false and family=Wilk, given=Mark, prefix=van der, useprefix=false and family=Haan, given=Pim, prefix=de, useprefix=false},
  date = {2024-10-10},
  eprint = {2410.08087},
  eprinttype = {arXiv},
  eprintclass = {cs},
  doi = {10.48550/arXiv.2410.08087},
  url = {http://arxiv.org/abs/2410.08087},
  pubstate = {prepublished}
}

@online{prakashAutomorphismAssistedQuantumApproximate2024,
  title = {Automorphism-{{Assisted Quantum Approximate Optimization Algorithm}} for Efficient Graph Optimization},
  author = {Prakash, Vaibhav N.},
  date = {2024-11-03},
  eprint = {2410.22247},
  eprinttype = {arXiv},
  eprintclass = {quant-ph},
  doi = {10.48550/arXiv.2410.22247},
  url = {http://arxiv.org/abs/2410.22247},
  pubstate = {prepublished}
}

@article{ruzickaConservedQuantitiesExceptional2021,
  title = {Conserved Quantities, Exceptional Points, and Antilinear Symmetries in Non-{{Hermitian}} Systems},
  author = {Ruzicka, Frantisek and Agarwal, Kaustubh S. and Joglekar, Yogesh N.},
  date = {2021-10-01},
  journaltitle = {Journal of Physics: Conference Series},
  shortjournal = {J. Phys.: Conf. Ser.},
  volume = {2038},
  number = {1},
  eprint = {2104.11265},
  eprinttype = {arXiv},
  eprintclass = {quant-ph},
  pages = {012021},
  issn = {1742-6588, 1742-6596},
  doi = {10.1088/1742-6596/2038/1/012021},
  url = {http://arxiv.org/abs/2104.11265}
}

@article{schatzkiTheoreticalGuaranteesPermutationequivariant2024,
  title = {Theoretical Guarantees for Permutation-Equivariant Quantum Neural Networks},
  author = {Schatzki, Louis and Larocca, Martín and Nguyen, Quynh T. and Sauvage, Frédéric and Cerezo, M.},
  date = {2024-01-22},
  journaltitle = {npj Quantum Information},
  shortjournal = {npj Quantum Inf},
  volume = {10},
  number = {1},
  pages = {12},
  publisher = {Nature Publishing Group},
  issn = {2056-6387},
  doi = {10.1038/s41534-024-00804-1},
  url = {https://www.nature.com/articles/s41534-024-00804-1}
}

@article{setiaReducingQubitRequirements2020,
  title = {Reducing Qubit Requirements for Quantum Simulation Using Molecular Point Group Symmetries},
  author = {Setia, Kanav and Chen, Richard and Rice, Julia E. and Mezzacapo, Antonio and Pistoia, Marco and Whitfield, James},
  date = {2020-10-13},
  journaltitle = {Journal of Chemical Theory and Computation},
  shortjournal = {J. Chem. Theory Comput.},
  volume = {16},
  number = {10},
  eprint = {1910.14644},
  eprinttype = {arXiv},
  eprintclass = {quant-ph},
  pages = {6091--6097},
  issn = {1549-9618, 1549-9626},
  doi = {10.1021/acs.jctc.0c00113},
  url = {http://arxiv.org/abs/1910.14644}
}

@article{shaydulinClassicalSymmetriesQuantum2021,
  title = {Classical Symmetries and the {{Quantum Approximate Optimization Algorithm}}},
  author = {Shaydulin, Ruslan and Hadfield, Stuart and Hogg, Tad and Safro, Ilya},
  date = {2021-11},
  journaltitle = {Quantum Information Processing},
  shortjournal = {Quantum Inf Process},
  volume = {20},
  number = {11},
  eprint = {2012.04713},
  eprinttype = {arXiv},
  eprintclass = {quant-ph},
  pages = {359},
  issn = {1570-0755, 1573-1332},
  doi = {10.1007/s11128-021-03298-4},
  url = {http://arxiv.org/abs/2012.04713}
}

@article{shaydulinExploitingSymmetryReduces2021a,
  title = {Exploiting {{Symmetry Reduces}} the {{Cost}} of {{Training QAOA}}},
  author = {Shaydulin, Ruslan and Wild, Stefan M.},
  date = {2021},
  journaltitle = {IEEE Transactions on Quantum Engineering},
  shortjournal = {IEEE Trans. Quantum Eng.},
  volume = {2},
  eprint = {2101.10296},
  eprinttype = {arXiv},
  eprintclass = {quant-ph},
  pages = {1--9},
  issn = {2689-1808},
  doi = {10.1109/TQE.2021.3066275},
  url = {http://arxiv.org/abs/2101.10296}
}

@online{shiraliBreakNotBreak2025,
  title = {To Break, or Not to Break: {{Symmetries}} in Adaptive Quantum Simulations, a Case Study on the {{Schwinger}} Model},
  shorttitle = {To Break, or Not to Break},
  author = {Shirali, Karunya Shailesh and Sherbert, Kyle and Chen, Yanzhu and Florio, Adrien and Weichselbaum, Andreas and Pisarski, Robert D. and Economou, Sophia E.},
  date = {2025-10-03},
  eprint = {2510.03083},
  eprinttype = {arXiv},
  eprintclass = {quant-ph},
  doi = {10.48550/arXiv.2510.03083},
  url = {http://arxiv.org/abs/2510.03083},
  pubstate = {prepublished},
  version = {1}
}

@article{shkolnikovAvoidingSymmetryRoadblocks2023,
  title = {Avoiding Symmetry Roadblocks and Minimizing the Measurement Overhead of Adaptive Variational Quantum Eigensolvers},
  author = {Shkolnikov, V. O. and Mayhall, Nicholas J. and Economou, Sophia E. and Barnes, Edwin},
  date = {2023-06-12},
  journaltitle = {Quantum},
  shortjournal = {Quantum},
  volume = {7},
  eprint = {2109.05340},
  eprinttype = {arXiv},
  eprintclass = {quant-ph},
  pages = {1040},
  issn = {2521-327X},
  doi = {10.22331/q-2023-06-12-1040},
  url = {http://arxiv.org/abs/2109.05340}
}

@article{smartLoweringTomographyCosts2021,
  title = {Lowering Tomography Costs in Quantum Simulation with a Symmetry Projected Operator Basis},
  author = {Smart, Scott E. and Mazziotti, David A.},
  date = {2021-01-28},
  journaltitle = {Physical Review A},
  shortjournal = {Phys. Rev. A},
  volume = {103},
  number = {1},
  pages = {012420},
  issn = {2469-9926, 2469-9934},
  doi = {10.1103/PhysRevA.103.012420},
  url = {https://link.aps.org/doi/10.1103/PhysRevA.103.012420}
}

@article{stoichevNewExactHeuristic2019a,
  title = {New {{Exact}} and {{Heuristic Algorithms}} for {{Graph Automorphism Group}} and {{Graph Isomorphism}}},
  author = {Stoichev, Stoicho D.},
  date = {2019-12-17},
  journaltitle = {ACM Journal of Experimental Algorithmics},
  shortjournal = {ACM J. Exp. Algorithmics},
  volume = {24},
  pages = {1--27},
  issn = {1084-6654, 1084-6654},
  doi = {10.1145/3333250},
  url = {https://dl.acm.org/doi/10.1145/3333250}
}

@article{varjasQsymmAlgorithmicSymmetry2018,
  title = {Qsymm: {{Algorithmic}} Symmetry Finding and Symmetric {{Hamiltonian}} Generation},
  shorttitle = {Qsymm},
  author = {Varjas, Daniel and Rosdahl, Tomas O. and Akhmerov, Anton R.},
  date = {2018-09-24},
  journaltitle = {New Journal of Physics},
  shortjournal = {New J. Phys.},
  volume = {20},
  number = {9},
  eprint = {1806.08363},
  eprinttype = {arXiv},
  eprintclass = {cond-mat},
  pages = {093026},
  issn = {1367-2630},
  doi = {10.1088/1367-2630/aadf67},
  url = {http://arxiv.org/abs/1806.08363}
}
\newpage

\end{document}